\documentclass[letterpaper,USenglish,cleveref]{lipics-v2021}

\usepackage{macros}

\title{Constructive Separations from Gate Elimination} 

\author{Marco Carmosino}{MIT-IBM Watson AI Lab, Cambridge, MA, USA}{mlc@ibm.com}{https://orcid.org/0009-0007-1118-1352}{}

\author{Ngu Dang}{Boston University, Boston, MA, USA}{ndang@bu.edu}{https://orcid.org/0009-0004-2774-2247}{}

\author{Tim Jackman}{Boston University, Boston, MA, USA}{tjackman@bu.edu}{https://orcid.org/0000-0002-2293-5670}{}

\authorrunning{M. Carmosino, N. Dang, and T. Jackman} 

\Copyright{Marco Carmosino, Ngu Dang, and Tim Jackman} 

\ccsdesc[100]{Theory of computation~Circuit complexity}

\keywords{Circuit Complexity, Constructivity} 

\category{} 

\relatedversion{} 
\supplement{\supplementdetails[linktext={Machine-Readable Convergence Proofs},  swhid={Software Heritage Identifier}]{General Classification (e.g. Software, Dataset, Model, ...)}{https://feasible-math.org/} }

\acknowledgements{}

\EventEditors{John Q. Open and Joan R. Access}
\EventNoEds{2}
\EventLongTitle{42nd Conference on Very Important Topics (CVIT 2016)}
\EventShortTitle{CVIT 2016}
\EventAcronym{CVIT}
\EventYear{2016}
\EventDate{December 24--27, 2016}
\EventLocation{Little Whinging, United Kingdom}
\EventLogo{}
\SeriesVolume{42}
\ArticleNo{23}
\begin{document}
\nolinenumbers
\maketitle

\begin{abstract}
    Gate elimination is the primary technique for proving explicit lower bounds against general Boolean circuits, including Li and Yang's state-of-the-art $3.1n - o(n)$ bound for affine dispersers (STOC 2022). Every circuit lower bound is implicitly existential: every circuit that is too small to compute $f$ must err on some input. This raises a natural question: are these lower bounds \emph{constructive}? That is, can we efficiently produce such errors? Chen, Jin, Santhanam, and Williams showed that constructivity plays a central role in many longstanding open problems in complexity theory, and explicitly raised the question of which circuit lower bound techniques can be made constructive (FOCS 2021).

    We show that a variety of gate elimination arguments yield refuters -- efficient algorithms that, when given a circuit that is too small to compute a function $f$, produce an input on which the circuit errs. Our results range from elementary lower bounds for $\XOR$ and the multiplexer to more sophisticated arguments for affine dispersers. Underlying these results is a shift in perspective: gate elimination arguments \emph{are} algorithms. Each step either simplifies the circuit or reveals a violation of some structural or functional property, from which, with a little additional work, explicit counterexamples can be extracted. 
    
    We further strengthen the $\XOR$ result to handle circuits that \emph{match} the lower bound: given any DeMorgan circuit of size $3(n-1)$ that fails to compute $\XOR_n$, we can efficiently produce a counterexample. While refuters follow from the gate elimination arguments themselves, this refinement requires a complete characterization of the set of optimal circuits computing $\XOR$ -- a requirement rarely met by other explicit functions.
\end{abstract}

\newpage

\section{Introduction}
One of the most important open problems in complexity theory is to characterize the functions computed by polynomial-size Boolean circuits.
By a straightforward counting argument, most functions require huge circuits of size roughly $2^n / n^2$ .
But, after decades of research, it remains open to identify an \emph{explicit} Boolean function (in $\P$ or even $\NP$) that requires even \emph{super-linear} circuit size.

The state of the art is a $3.1n - o(n)$ lower bound for \emph{affine dispersers} (which can be computed in $\P$), proved by Li and Yang in 2021 \cite{Li022}.
They used \emph{gate elimination,} the best known technique for proving explicit and unconditional lower bounds against general Boolean circuits.
Proving a lower bound for $\func{f}$ via gate elimination works by setting the inputs of a circuit $\cC$ to carefully-chosen values and simplifying $C$ until we either (1) derive a contradiction to the assertion ``$\cC$ computes $\func{f}$'' or (2) shrink a \emph{complexity measure} of $\cC$.

Gate elimination has been used to push the research frontier in circuit lower bounds since 1974 \cite{Schnorr74}.
Yet we remain unable to identify an explicit function in $\NP$ that requires circuits of size $10n$.
Indeed, a certain formalization of gate elimination cannot prove lower bounds better than $5n$ \cite{DBLP:journals/jcss/GolovnevHKK18}.
However, the barrier does not cover every argument by gate elimination, including those that depend on the optimal structure of circuits computing $\func{f}$ or rely on special properties of the hard function $\func{f}$ to successful carry out induction.

Gate elimination arguments that evade this barrier have been used to obtain breakthrough circuit lower bounds \cite{DBLP:journals/tcs/Blum84,FindGHK2016}, establish ETH-hardness of the partial Minimum Circuit Size Problem ($\MCSP^\star$) \cite{DBLP:journals/siamcomp/Ilango24},
design algorithms for the $\XOR$-Simple Extension\footnote{The $\func{f}$-Simple Extension Problem was an important part in ETH-hardness of $\MCSP^\star$.}
Problem \cite{CarmosinoDJ2025}, and characterize the set of optimal circuits for a variety of key Boolean functions, including addition \cite{Redkin1973}.
What can be said about this family of arguments, if anything?
Towards a better understanding of gate elimination in general, this work shows that selected lower bounds for $\XOR$, the multiplexer, and affine dispersers can be made \emph{constructive.}

The statement ``$\func{f}$ does not have size $s(n)$ circuits'' is implicitly existential; every circuit that fails to compute $\func{f}$ must err on at least one input $x$.
Thus, every circuit lower bound for a specific function $\func{f}$ induces a total search problem: given as input a circuit $\cC$ that is too small (and therefore \emph{must} fail to compute $\func{f}$) print a bitstring on which $\cC$ and $\func{f}$ disagree.
Constructive lower bounds  play a surprisingly central role in complexity theory \cite{ChenJSW24}.

On the one hand, extremely difficult but widely-conjectured separations like $\P \neq \NP$ are \emph{automatically} constructive if they hold at all.
On the other hand, if certain \emph{weak} and \emph{simple} complexity lower bounds like ``Palindromes require super-linear time to decide on one-tape Turing machines'' can be made even $\P^\NP$-constructive, then $\P \neq \NP$!
Paraphrasing Chen, Jin, Santhanm and Williams: the intuition that ``easy to prove'' lower bounds can be made constructive is ``wildly inaccurate'' \cite{ChenJSW24}.
Their work studied complexity separations for uniform classes of algorithms, and concluded by suggesting that ``it would be interesting to examine which proof methods for circuit lower bounds can be made constructive''.

Because gate elimination is the best tool we have to prove lower bounds for \emph{unrestricted} Boolean circuits, it is a natural starting point.
Furthermore, arguments via gate elimination seem difficult to categorize as ``easy'' or ``hard''.
Globally, state-of-the-art lower bounds use deeply nested case analyses and carefully tailored complexity measures.
Locally, each individual case of these arguments employs totally elementary manipulation of easily-identified subcircuits.
We do not resolve the ``proof complexity'' of gate elimination here, but are able to show that selected arguments \emph{hide efficient circuit-analysis algorithms.}
 
\subsection{Our Results \& Contributions}
We consider circuit size over two different bases.
The \emph{DeMorgan} basis contains binary AND and OR gates and NOT gates.
The \emph{$\mathbb{B}_2$} basis contains every binary Boolean function.
The \emph{size} of a circuit $\C$ is the number of binary gates.  
Recall that the \emph{multiplexer} ($\MUX_n$, also known as the storage access function) is a Boolean function on $n + 2^n$ bits, where the first $n$ \emph{address bits} determine which of the $2^n$ \emph{data bits} to output (Definition \ref{def:mux}).
Towards extracting a constructive separation, in Section \ref{sec:lower-bound}, we give an elementary proof of

\begin{lemma}
  \label{lem:MUX-CktLB}
  Let $N = 2^n$.  $\MUX_n$ requires DeMorgan-size at least $2N + \log N - 2$ to compute.
\end{lemma}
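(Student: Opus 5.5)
The plan is induction on $n$: show that an optimal DeMorgan circuit for $\MUX_n$ can be turned into a circuit for $\MUX_{n-1}$ by substituting constants, and that doing so removes at least $N+1$ binary gates. Writing $s(n)$ for the minimum DeMorgan size of $\MUX_n$, this gives $s(n) \ge s(n-1) + N + 1$. The base case is $s(0)=0=2\cdot 1+0-2$, since $\MUX_0$ is a single data bit. The recurrence telescopes to $s(n) \ge \sum_{k=1}^{n}(2^k+1) = 2N-2+n$, which is the claimed bound $2N+\log N-2$.

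\textbf{Inductive step.} Let $\cC$ compute $\MUX_n$, with NOT gates pushed to the inputs or absorbed into the binary gates; this is free, since only binary gates are counted. Let $a_1$ be the first address bit, and let $X_1$ be the set of $N/2$ data bits selected when $a_1=1$. We fix the variables of $X_1$ one at a time, and only afterwards set $a_1=0$.

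\textbf{Fixing a data bit $x\in X_1$.} Let $G$ be a gate fed by $x$; one exists because the current function still depends on $x$. Since $G$ is an AND or an OR, with possible negations on its inputs, there is a constant $c$ that forces $G$ to a constant; set $x=c$. Any value of $c$ is acceptable, because every variable of $X_1$ becomes irrelevant once we set $a_1=0$ at the end. With $x=c$ the gate $G$ disappears. Moreover, $G$ was not the output gate: after the substitution the function still depends on $a_1$ and on the data bits not in $X_1$, so it is not constant. Hence $G$ has a successor $H$. Since $H$ now receives a constant input, it either becomes constant or passes its other input through, and in either case it is eliminated. This removes two gates per data bit, for $N$ gates in total over $X_1$.

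\textbf{Fixing the address bit.} After all of $X_1$ is fixed, the function is a multiplexer in which the $a_1=1$ half of the data is constant. It still depends on $a_1$: when $a_1=0$ the output follows free data bits, and when $a_1=1$ it is constant. So $a_1$ feeds some gate, and setting $a_1=0$ eliminates at least one more gate. The resulting circuit computes $\MUX_{n-1}$ on the remaining address bits and on the data bits outside $X_1$, so it has at least $s(n-1)$ gates. This gives the recurrence.

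\textbf{Main obstacle.} The delicate part is making sure that the two gates charged to each data bit are genuinely distinct, newly eliminated gates, and are not gates already removed by earlier substitutions. This requires three checks:
\begin{itemize}
\item $G \neq H$, and $G$ is not fed by $x$ twice; handle these degenerate cases by first normalizing the circuit, removing duplicate inputs and gates with a constant input.
\item The current function must still depend on $x$, so that $x$ indeed feeds a gate. This holds because fixing other members of $X_1$ never makes a remaining member of $X_1$ irrelevant while $a_1$ is free.
\item The function must not become constant until the very end.
\end{itemize}
I would verify these invariants explicitly at each step. If the successor argument fails in some corner case (for example, $H$ being the output gate with another input that is already constant), I would fall back to choosing the fixed variable to be one of top-most fanout, in the style of the classical $2N-2$ argument for $\MUX$.
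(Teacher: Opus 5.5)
Your proposal is correct and is essentially the paper's argument: fix each data bit of the $a_1=1$ half to kill a non-output gate $G$ together with its successor, then set $a_1=0$ to kill one more gate and recurse on $\MUX_{n-1}$, with your base case $n=0$ simply absorbing the paper's separate check that $\MUX_1$ needs $3$ gates. Two minor remarks: when $a_1=1$ the restricted function is independent of the free data bits rather than constant, which is all your dependence argument for $a_1$ needs; and the top-fanout fallback is unnecessary, since $H$ is eliminated by receiving a constant input whether or not it is the output.
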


Lemma \ref{lem:MUX-CktLB} is not tight; the best \emph{upper bound} for $\MUX_n$ is a DeMorgan circuit of size $2N + O(\sqrt{N})$ \cite{KleinP1980}.
Even so, it demonstrates constructivity in gate elimination.
We recall the definition of a constructive separation for non-uniform algorithms.

\begin{definition}[Refuters against circuits \cite{ChenJSW24}]
  Let $\func{f} : \{0,1\}^\star \to \{0,1\}$ be a Boolean function, and let $\Gamma$ denote a standard complexity class.
  A $\Gamma$-\emph{refuter} for $\func{f}$ against size $s(n)$ circuits is a $\Gamma$-algorithm $R$ that, given input $1^n$ and any circuit $\cC$ of size strictly less than $s(n)$, prints a string $x \in \{0,1\}^n$ such that $R(1^n, \cC) \neq \func{f}(x)$ for almost every $n$.
  We say there is a $\Gamma$-\emph{constructive separation} of $\func{f}$ from size $s(n)$ if there is a $\Gamma$-refuter for $\func{f}$ against $s(n)$-size circuits.
  When $\Gamma = \P$, we drop the prefix and just say \emph{refuter} and \emph{constructive separation}.
\end{definition}

Inspecting the proof of Lemma \ref{lem:MUX-CktLB}, in Section \ref{sec:mux-refuter}, we obtain

\begin{theorem}
There is a constructive separation of $\MUX$ from DeMorgan-size $2N + \log N - 2$.
\end{theorem}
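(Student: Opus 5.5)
The refuter simulates a gate-elimination proof of Lemma~\ref{lem:MUX-CktLB} on the given circuit $\cC$ of DeMorgan-size less than $2N + \log N - 2$. Every structural claim the lower-bound proof makes about a correct circuit for $\MUX_n$ is checked locally in polynomial time. When a claim fails, we extract an explicit input on which $\cC$ errs.

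The proof I have in mind goes by induction on $n$, and it fixes one address bit at a time. First I would show that every data bit $y_i$ must feed at least one binary gate, and that in fact the gates touching the data bits number at least $2N$ overall. Then I would show that setting an address bit $a_j$ to a suitable constant eliminates at least one gate and reduces $\MUX_n$ to $\MUX_{n-1}$ on half the data bits. Concretely, the lower bound should come from two observations. (i) If some data bit $y_i$ has fan-out zero or does not appear, then $\cC$ ignores $y_i$. (ii) If an address bit $a_j$ has no gate whose elimination is forced, then after the restriction the remaining circuit is too small for $\MUX_{n-1}$ by induction.

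The refuter $R$ runs this recursion algorithmically. At each stage it holds a partial restriction $\rho$ of the address bits and the simplified circuit $\cC|_\rho$, which is obtained by constant propagation and removal of trivial gates in polynomial time. It then checks the relevant property.

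\textbf{Case 1: some data bit $y_i$ with $i$ consistent with $\rho$ is not read by $\cC|_\rho$.} The circuit's output is then independent of $y_i$. $R$ sets the address to $i$ and all other bits arbitrarily, evaluates $\cC$ at $y_i = 0$ and $y_i = 1$, and outputs the one of these two inputs on which $\cC$ disagrees with $y_i$. Since $\cC$ gives the same value on both while $\MUX$ gives different values, exactly one of them is a counterexample.

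\textbf{Case 2: the gate counting around the data bits fails.} Suppose a data bit $y_i$ is read by only one gate $g$ that is not ``protected''. Then $g$ computes $y_i \circ h$ or $\lnot$ of that, and fixing $h$ (or showing $h$ constant under the address $i$) yields a small case analysis. In each subcase, either $\cC$ cannot pass $y_i$ to the output, which is handled by evaluating a few candidate inputs, or we can eliminate a gate and recurse.

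\textbf{Case 3: all checks pass.} $R$ fixes the next address bit, chosen as in the proof, and recurses on a smaller circuit, which by the counting is too small for $\MUX_{n-1}$.

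The recursion has depth $n$, and each level costs polynomial time in $|\cC|$ plus $O(N)$ circuit evaluations. If the recursion reaches $n=0$ with a circuit of negative ``budget'', some check must have fired earlier, because the counting in Lemma~\ref{lem:MUX-CktLB} is exhaustive. The correctness of $R$ therefore follows case-by-case from the lemma's proof. Every candidate counterexample $x$ is verified by direct evaluation of $\cC(x)$ versus $\MUX_n(x)$ before it is printed, so $R$ never outputs a wrong answer.

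The main obstacle is Case~2. The existential proof typically argues that ``some gate fed by $y_i$ must have fan-out $\ge 2$, or else the function would degenerate''. Making this constructive requires turning ``the subfunction does not depend correctly on $y_i$'' into a concrete input. My first attempt is to use the restriction to address $i$: under it, a correct circuit must compute exactly $y_i$. So I would evaluate $\cC|_{a=i}$ on both values of $y_i$ for a couple of settings of the other data bits (all zeros, and the setting forced by the blocking gate $h$). I expect that the gate-elimination step guarantees one of these polynomially many candidates is wrong, and $R$ simply tests all of them.
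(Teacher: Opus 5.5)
There is a genuine gap, and it sits in the accounting that is supposed to make your recursive call legitimate. You restrict only address bits. After fixing $a_j$, the simplified circuit is guaranteed to have lost just one gate, yet it may still read all $N/2$ data bits that have just become irrelevant. Its size is then below $2N+n-3$, but nowhere near the threshold $2\cdot 2^{n-1}+(n-1)-2$ needed to invoke the refuter for $\MUX_{n-1}$. Moreover, once those data bits are irrelevant, nothing about them (for instance, being unread) yields a counterexample, so you cannot charge gates to them afterwards.

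The global count you propose instead does not hold. You want at least $2N$ gates touching data bits plus one gate per address bit, which totals $2N+n$. This is false already for $n=1$: $(\neg a\land x_1)\lor(a\land x_2)$ computes $\MUX_1$ with $3$ gates. More generally, data bits share gates; the usual AND/OR-tree circuit has only $2N-1$ gates on data-to-output paths. Finally, Case~2 is left as an expectation. Verifying each candidate before printing only makes $R$ sound, not total: nothing in your proposal shows that some candidate must fail.

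The missing idea is the order of substitutions, which is how the paper proceeds. Consider a level with $m$ live address bits.
\begin{itemize}
\item First process the half $D$ of live data bits that the current address bit will discard, \emph{while they are still relevant}.
\item If some $x_i\in D$ is unread, evaluate at the address of $x_i$ with $x_i$ flipped. Otherwise, let $\alpha$ be the minimal gate reading $x_i$ and substitute the constant that fixes it.
\item If $(\neg)\alpha$ is the output, the restricted circuit is constant while the restricted multiplexer is not, since the lower-half bits are still live; flipping one of them gives a counterexample. Otherwise both $\alpha$ and a gate reading it disappear.
\item Only after all of $D$ is fixed, check that the address bit is still read. If it is not, take the two addresses differing only in this bit: one selects the first lower-half bit, the other the first bit of $D$. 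Set the former opposite to the already-fixed value of the latter, and one of the two inputs is a counterexample.
\item Then set the address bit to $0$, which removes one more gate.
\end{itemize}
Each level thus removes $2^m+1$ gates, exactly the difference between the level-$m$ and level-$(m-1)$ thresholds. All candidate inputs agree with the data values already substituted, and the base case $m=1$ is brute force over four inputs.

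Note that two gates are charged only to each discarded half, giving $2(N-2)$ in total. Together with $n-1$ address gates and $3$ base-case gates, this makes up $2N+n-2$, not your $2N+n$.
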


The most advanced circuit lower bounds proved via gate elimination are for \emph{affine dispersers:} Boolean functions that are non-constant on every $d$-dimensional affine subspace of $\mathbb{F}_2^n$.
So we capture their constructivity by introducing the following class of search problems.

\begin{definition}
  Let $d : \mathbb{N} \to \mathbb{N}$ be an arbitrary function.  An \emph{affine refuter} for dimension $d$ against size $s(n)$ circuits is a $\P$-algorithm that, given input $1^n$ and any circuit $C$ of size strictly less than $s(n)$, prints the description of an affine subspace of dimension $d(n)$ on which $C$ is constant.
\end{definition}

Inspecting the elementary proof of $3n - o(n)$ lower bounds for affine dispersers \cite{DemenkovK11}, in Section \ref{sec:affine-refuter}, we obtain

\begin{theorem}
  There is an affine refuter for dimension $d$ against $\mathbb{B}_2$-size $3n - 4d(n)$.
\end{theorem}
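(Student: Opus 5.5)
We make the Demenkov--Kulikov gate elimination argument for affine dispersers into an algorithm. The algorithm maintains a circuit $C$ over the current variables together with a list of affine restrictions $x_{i_j} = \ell_j(\text{remaining variables}) \oplus b_j$. These restrictions define an affine subspace $S \subseteq \mathbb{F}_2^n$ that is parametrized by the remaining free variables. We also maintain a potential $\mu(C) = s(C) + N(C)$, where $s$ is the number of gates and $N$ is the number of input variables of out-degree at least one (``alive'' variables), which is the measure Demenkov and Kulikov use. The invariant is that, after $k$ restriction steps, the circuit restricted to $S$ equals the current $C$, $S$ has dimension $n-k$, and $\mu$ has dropped by at least $4k$. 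Initially $\mu \le s + n < 4n - 4d$. We iterate while the dimension exceeds $d$. If the measure reaches the point where $C$ has no gates, then $C$ is a constant or a single (affine) variable or literal. In that case we add one more affine constraint to fix it, and then output any $d$-dimensional subspace of the current $S$.

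\textbf{Key step (one round).} Suppose the current $C$ is nonconstant and has dimension $m > d$. We perform a topological scan to find the Demenkov--Kulikov case. First, if some variable has out-degree $0$ it contributes nothing, so we fix it to $0$; this drops the dimension at no cost in $\mu$ and does not hurt the invariant's accounting in the limit. A cleaner option is to simply count only alive variables. If some variable has out-degree $1$, then its successor gate $G$ computes $h(x_i, y)$, and we substitute $x_i := (\text{affine expression})$ to make $G$ constant or pass through a linear function. The standard trick is that we can always make $G$ compute a linear function of its other input, so $G$ is eliminated together with $x_i$.

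In the main case we take a top-most gate $A$ whose two inputs are variables $x_i, x_j$ (the first gate in topological order), or, more generally, the first non-linear gate. Gates computing affine functions of the inputs are collapsed: a linear gate with both predecessors linear is replaced by the ``virtual'' affine input it computes. This is exactly how the argument handles $\mathbb{B}_2$ $\XOR$ gates. For the first non-linear gate $A$, of type $(\ell_1 \oplus a)(\ell_2 \oplus b) \oplus c$ where $\ell_1, \ell_2$ are affine forms in variables, we impose $\ell_1 = a$. This kills $A$, kills its successors (which become constant-fed or trivial), and kills one variable. The case analysis on out-degrees of $A$ and of the variables in $\ell_1$ shows that $\mu$ drops by at least $4$. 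Each case is a purely local inspection of constant-size neighborhoods, so it runs in polynomial time. Afterwards we simplify the circuit: we propagate constants, delete gates that have become unary or dead, and merge linear gates.

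\textbf{Correctness and main obstacle.} Since $\mu$ starts below $4n - 4d$ and drops by at least $4$ per round, after $n-d$ rounds $\mu$ would be negative. Hence at some round with dimension still $\ge d$ the circuit is a constant or an affine function, and we finish as described above, adding one affine constraint if needed. This requires the dimension to be at least $d+1$ there, so we may need to tweak the constant by one. Every substitution is affine, so the final $S$ is an affine subspace of the required dimension on which the original circuit is constant.

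The main obstacle I expect is verifying the ``drop by 4'' accounting in every case of Demenkov--Kulikov. In particular, degenerate situations arise in which substitutions create duplicate or linear-in-same-variable gates, or in which the gate $A$ feeds a gate that becomes its own complement. For these I would follow the original case tree verbatim and check that each case identifies its eliminated gates explicitly and constructively. I would also confirm that the simplification step never increases $\mu$.
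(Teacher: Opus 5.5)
Your overall plan matches the paper's. You run the Demenkov--Kulikov case analysis as a recursive algorithm with $\mu=\sigma+\eta$, spend one affine constraint per round, keep $\mu<4(m-d)$ at dimension $m$, and stop when the circuit is constant or a parity. The paper packages this as a lemma for an arbitrary starting affine space, intersecting with one parity constraint per call.

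However, the round you describe contains steps that break the four-per-dimension accounting. Fixing a dead variable spends a dimension for no drop in $\mu$; as you half-noticed, you should simply ignore such variables. More seriously, the out-degree-$1$ step fails. If $x_i$ feeds only $G=x_i\oplus y$, every constant turns $G$ into a literal of $y$, so only $x_i$ and $G$ disappear and $\mu$ drops by $2$. Fixing an $\wedge$-type $G$, or substituting $x_i:=y\oplus a$ for affine $y$, can give as little as $3$ when $G$ has out-degree $1$. This step has to be deleted. Demenkov and Kulikov never preprocess low-degree variables: when both inputs $x_p,x_q$ of the pivot gate have out-degree $1$, fixing $x_p$ also kills $x_q$, and that is where the fourth unit comes from.

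Second, the pivot must be the topologically first gate that is \emph{not} an out-degree-$1$ $\oplus$-type gate. It should not be ``the first non-linear gate'' with linear gates collapsed into virtual inputs. There are two reasons.
\begin{itemize}
\item A $\oplus$-type gate of out-degree $\ge 2$ is its own case: make it constant, and it, its $\ge 2$ successors, and one variable vanish.
\item Minimality guarantees that the affine form $\ell_1=x_j\oplus r$ feeding the pivot is computed by a tree of out-degree-$1$ $\oplus$ gates. That tree can be re-associated so that the substitution $x_j:=r\oplus a$ is implemented by rewiring existing gates. If $\ell_1$ is computed with shared gates, the substitution may need new gates and $\sigma$ can grow.
\end{itemize}
Also, when an $\wedge$-type gate is fed by a $\oplus$ gate $\beta$, the fourth eliminated unit is $\beta$ itself. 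So ``kills $A$, its successors, and one variable'' only accounts for three.

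With these replacements your proof coincides with the paper's, which likewise defers the two-variable subcases to Demenkov--Kulikov. Finally, no adjustment of constants is needed at the end: $0\le\mu(C)<4(m-d)$ already forces $m\ge d+1$, so the last constraint always leaves dimension at least $d$.
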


\begin{corollary}
    Fix any explicit affine disperser $\func{f} \in \P$ of dimension $o(n)$.  There is a $\P^\NP$-constructive separation of $\func{f}$ from circuits of size $3n - o(n)$.
\end{corollary}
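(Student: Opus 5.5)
The plan is to combine the affine refuter from the preceding theorem with an $\NP$ oracle that finds a point of disagreement inside the affine subspace it produces. Fix an explicit affine disperser $\func{f} \in \P$ of dimension $d(n) = o(n)$. Let $\cC$ be a circuit of size strictly less than $3n - 4d(n)$, which is $3n - o(n)$. The refuter proceeds in three steps.

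\textbf{Step 1: find a constant subspace.} Run the affine refuter from the preceding theorem on $(1^n, \cC)$. In polynomial time it outputs a description of an affine subspace $A \subseteq \mathbb{F}_2^n$ of dimension $d(n)$ on which $\cC$ is constant. A natural description is a point $a$ together with $d$ basis vectors, or equivalently a system of linear equations. Evaluate $\cC$ at any point of $A$, say $a$, to learn the constant value $b$.

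\textbf{Step 2: certify that a counterexample exists in $A$.} Since $\func{f}$ is an affine disperser for dimension $d(n)$, it is non-constant on $A$. Hence some $x \in A$ has $\func{f}(x) = 1 - b \neq \cC(x)$.

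\textbf{Step 3: locate it with an $\NP$ oracle.} Parametrize $A$ as $x = a + \sum_i y_i v_i$ with $y \in \{0,1\}^d$. The predicate ``$\func{f}(x(y)) = 1-b$'' is polynomial-time computable because $\func{f} \in \P$. Find a witness $y$ by the standard search-to-decision prefix-extension method. For each $i$ in turn, query the $\NP$ oracle on ``does there exist an extension of the current partial assignment $y_1,\dots,y_i$ with $\func{f}(x(y)) = 1-b$?'', and fix $y_i$ accordingly. This takes $d \le n$ oracle queries. Output $x(y)$.

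Because $\cC(x) = b \neq \func{f}(x)$, the output is a genuine counterexample. The whole procedure is a $\P^\NP$ algorithm, so it is a $\P^\NP$-refuter, and the separation is $\P^\NP$-constructive.

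The step that needs the most care is matching parameters. The affine refuter works against size $3n - 4d(n)$, and this must cover every circuit the corollary calls ``size $3n - o(n)$''. I would read the statement as: for the specific sublinear term $4d(n) = o(n)$, circuits of size below $3n - 4d(n)$ are refuted. The requirement that the refuter work ``for almost every $n$'' is then inherited from the affine refuter. I also need that $\func{f}$ is a disperser for every sufficiently large $n$ at the chosen dimension, which the definition of an explicit disperser supplies.

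A remark on why the oracle is needed: if $d$ were $O(\log n)$, exhaustive search over $A$ would replace the $\NP$ oracle and give a $\P$-refuter. For general $d = o(n)$, the oracle is what makes the search efficient.
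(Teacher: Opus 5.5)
Your proposal is correct and is essentially the argument the paper has in mind (the corollary is stated without a separate written proof). You run the affine refuter to obtain a subspace of dimension at least $d(n)$ on which $\cC$ is constant, use the disperser property to guarantee that $\func{f}$ takes the opposite value somewhere in it, and recover such a point by $\NP$-oracle prefix search. Your reading of ``size $3n - o(n)$'' as size below $3n - 4d(n)$ correctly matches the refuter's parameters, and your remark that $d = O(\log n)$ removes the need for the oracle is also correct.
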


Denote by $\XOR_n$ the sum modulo 2 of $n$ bits. There is a generalization of Schnorr's lower bound that totally characterizes the set of optimal DeMorgan-circuits computing $\XOR_n$ \cite{CarmosinoDJ2025}. In Section \ref{sec:xor-refuter}, we first inspect Schnorr's classic argument to obtain

\begin{theorem}
    There is a constructive separation of $\XOR$ from DeMorgan-size $3(n-1)$.
\end{theorem}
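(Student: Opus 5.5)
We turn Schnorr's inductive gate-elimination argument into a recursive algorithm $R(\cC)$. Given a DeMorgan circuit $\cC$ on $n$ variables with fewer than $3(n-1)$ binary gates, $R$ either outputs an $x$ with $\cC(x) \neq \XOR_n(x)$, or fixes one variable to a constant and recurses on a simplified circuit over $n-1$ variables with at least $3$ fewer gates. Any restriction of $\XOR_n$ is $\XOR_{n-1}$ or its negation, so a counterexample for the restricted circuit lifts to one for $\cC$ by appending the fixed bit. The measure $\text{size} - 3(\text{vars}-1)$ stays negative throughout, which gives the invariant for the recursion. Each step takes polynomial time, since it only evaluates the circuit and inspects the neighbourhoods of a few gates, and there are at most $n$ steps, so $R$ runs in polynomial time.

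\textbf{Base case.} When $n=1$, the size is negative, which is impossible; equivalently, $n=2$ with at most $2$ gates forces some degeneracy that we handle directly. At each stage we first check the cheap structural failures, each of which yields a counterexample by evaluation.
\begin{itemize}
\item If some variable $x_i$ does not feed any gate, then $\cC$ does not depend on $x_i$. Take any $y$ and flip bit $i$. The circuit output is unchanged but $\XOR$ flips, so one of $y$ and $y \oplus e_i$ is a counterexample; we find it with two evaluations.
\item If some $x_i$ has out-degree $1$ into gate $g$ (after pushing negations through), we attempt the standard argument. Let $h$ be the other input of $g$. If $h$ is a variable or gate, setting $h$'s value to make $g$ constant would make $\cC$ independent of $x_i$ on that subcube.
\end{itemize}
Schnorr's actual argument picks a topologically minimal gate $g$ whose two inputs are variables $x_i$ and $x_j$. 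It argues that one of $x_i, x_j$ has out-degree at least $2$; otherwise, setting $x_j$ to a blocking constant of $g$ kills the dependence on $x_i$. Moreover, some gate fed by that variable has a successor that is eliminated as well, so assigning the variable removes $3$ gates.

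\textbf{Case analysis as algorithm.} We follow each case of the proof.
\begin{itemize}
\item \emph{Out-degree $1$ for both variables.} Set $x_j$ to the controlling value of $g$, for example $0$ for an AND. Then $g$ is constant, and $\cC$ restricted this way is independent of $x_i$. The flip trick above, applied inside the subcube, produces a counterexample.
\item \emph{$x_i$ has out-degree at least $2$ into $g$ and $g'$, and $g$ is the output gate or has a successor.} If $g$ is the output, then fixing $x_j$ to a controlling value makes the circuit constant while $\XOR$ is still nonconstant on remaining variables ($n \ge 2$). We evaluate at two points differing in a free variable to find the error.
\item \emph{Otherwise.} Assign $x_i$ the value controlling $g$ (controlling one of $g, g'$ suffices). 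Both $g$ and $g'$ simplify away, and a successor of the now-constant gate also collapses, so $3$ gates disappear. We then recurse.
\end{itemize}
Degenerate situations are either simplified away, or, if they make a wire constant, handled by the same independence-flip trick. These situations include gates with both inputs equal, and the case where the successor of $g$ is $g'$ itself.

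\textbf{Main obstacle.} The hard step is making the ``three gates eliminated'' claim fully algorithmic in the presence of negations and shared successors. If $g$ and $g'$ share a successor, or the successor only partially simplifies, the proof appeals to the fact that the resulting gate would compute a function that $\XOR$ circuits cannot afford. We make this effective by showing that in each such subcase some gate becomes a function of a single variable and is therefore redundant, so the elimination count still holds; or else the circuit becomes independent of a variable on an explicit subcube, which we detect by evaluation. We would first normalize $\cC$ by pushing NOTs to inputs and removing gates with constant inputs (size-nonincreasing and polynomial time). This reduces the case analysis to AND/OR gates over literals.
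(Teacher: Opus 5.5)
Your plan matches the paper's: run Schnorr's $3(n-1)$ argument as a recursive refuter. It returns one of two evaluated points whenever a variable is unread, a fan-out-$1$ variable can be disconnected, or the circuit becomes constant on a subcube. Otherwise it restricts one variable so that three gates vanish, and recurses.

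\textbf{The gap.} It sits at the step you call the main obstacle, and it comes from your choice of which gate to control. You set $x_i$ to the value controlling the minimal gate $g$, so the promised third gate is a successor of $g$. If $g'$ is the \emph{only} successor of $g$, you have only $g$ and $g'$, and neither remedy you offer handles this case.
\begin{itemize}
\item The independence--flip trick need not apply. Take $g=x_1\wedge x_2$, $g'=x_1\vee g$, output $g'\vee(x_2\vee x_3)$, and $x_i=x_1$. The restriction $x_1\gets 0$ leaves $x_2\vee x_3$, which depends on both free variables.
\item $g'$ need not be a function of one variable. For example, $g'=x_i\wedge\neg g$ computes $x_i\wedge\neg x_j$.
\end{itemize}

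\textbf{The fix} is one line. When $g$ feeds only $g'$, both inputs of $g'$ are now constant, so $g'$ is constant. Either $g'$ is the output, so the circuit is constant on a subcube and you evaluate two points differing in a free variable. Or a successor of $g'$ is eliminated, and it differs from $g$ because $g$ reads only inputs.

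The paper avoids this corner case entirely. It controls the \emph{other} gate (its $\beta$, your $g'$) and adds a check that $\beta$ is not the output. Then the successor $\gamma$ of $\beta$ cannot be the minimal gate $\alpha$, by topological minimality.

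Your other worries are not real. In the DeMorgan basis, a binary gate with a constant input is always removed by a fixing or passing rule, so nothing ``partially simplifies''. A shared successor $s\neq g'$ of $g$ and $g'$ is simply the third eliminated gate.

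\textbf{The preprocessing step is also wrong as stated.} Pushing negations to the inputs is not size-nonincreasing for circuits. A gate used both positively and negated needs a dual copy. For example, $h=x_1\wedge x_2$ feeding both $h\vee x_3$ and $\neg h\vee x_4$ forces a new gate $\neg x_1\vee\neg x_2$. This can break the strict invariant $\mathrm{size}<3(\mathrm{vars}-1)$ that your recursion relies on. Instead, treat negations as labels on wires, as the paper does with its $(\neg)\alpha$ convention. The controlling value of an input is then computed through the wire's negation, and the case analysis is otherwise unchanged.
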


But an argument that characterizes the set of optimal circuits should contain more powerful circuit analysis algorithms.
Indeed, we refine the refuter for $\XOR$ to solve the problem of efficiently \emph{checking} correctness of claimed optimal circuits for $\XOR$.
In particular, given a circuit $\cC$ of size exactly $3(n - 1)$ that purports to compute $\XOR_n$ but \emph{does not}, can we efficiently produce bitstrings $x$ such that $\cC(x) \neq \XOR_n(x)$?
Note that this is no longer a total search problem.
Even so, the task is feasible. In Section \ref{sec:improved-refuter}, we obtain

\begin{theorem}
    Let $\cC$ be any DeMorgan circuit on $n$ inputs with exactly $3(n-1)$ binary gates but does not compute $\XOR_n$. There is an efficient and deterministic algorithm that, given input $\cC$, prints an $n$-bit string $x$ such that $\cC(x) \neq \XOR_n(x)$.
\end{theorem}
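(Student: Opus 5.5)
We prove the theorem by running Schnorr's gate-elimination argument as an algorithm, with one strengthening: we need to certify error even when the circuit has exactly $3(n-1)$ gates, where plain gate elimination only guarantees a \emph{contradiction} for circuits of size below $3(n-1)$. We use the characterization of optimal DeMorgan circuits for $\XOR_n$ from \cite{CarmosinoDJ2025}. Every size-$3(n-1)$ circuit computing $\XOR_n$ is built from $n-1$ three-gate $\XOR_2$-gadgets composed in a tree-like fashion; the gadget forms are $(a\wedge\neg b)\vee(\neg a\wedge b)$, its dual, and their negation variants. The algorithm is recursive on $n$ and keeps the invariant that the current circuit has at most $3(n-1)$ gates. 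If it has strictly fewer, we call the refuter from the previous theorem (the constructive separation from size $3(n-1)$).

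\textbf{Step 1: one elimination step.} Let $\cC$ have exactly $3(n-1)$ gates.
\begin{itemize}
\item If some input has fan-out $0$, then $\cC$ is insensitive to that input. Flipping it on any string gives a counterexample, either on the original string or on the flipped one.
\item If some input $x_i$ has fan-out $1$, say into gate $g$, then Schnorr's argument says $g$'s other input, or $g$'s successors, must also depend on $x_i$. Otherwise, fixing $x_i$ makes $g$ constant and kills $g$'s successors.
\item Otherwise, following Schnorr, we locate an input $x_i$ of fan-out at least $2$ feeding a top gate. We substitute the constant $c$ that forces the most gates. This removes at least $3$ gates: two fed gates plus one forced successor.
\end{itemize}
In each case we check, in polynomial time, whether the local structure matches the gadget structure predicted by the characterization. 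If it does not, then setting $x_i=c$ removes at least $4$ gates, or leaves a dangling dependency. The restricted circuit on $n-1$ variables then has fewer than $3(n-2)$ gates, and we recurse into the subcritical refuter. We extend the returned string with $x_i=c$. A counterexample for the restriction against $\XOR_{n-1}$, shifted by the parity $c$, is a counterexample for $\cC$.

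\textbf{Step 2: the matching case.} If the local structure is exactly a gadget, the restriction $\cC|_{x_i=c}$ has exactly $3(n-2)$ gates, and we recurse on it. The key fact is that $\cC$ fails to compute $\XOR_n$ if and only if some restriction fails, when appropriately negated. So we try both $c=0$ and $c=1$. At most $2$ restricted circuits are produced per level, and the recursion tree could blow up. We avoid this by testing each restriction for correctness only via recursion along a single path. This works because the gadget structure means $\cC$ computes $\XOR_n$ iff both restrictions compute complementary $\XOR_{n-1}$. Concretely, we first compare the two restrictions on a few strings, since they must be complementary. If they are identical as circuits up to the gadget, one recursive branch suffices. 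If not, we recurse on the sides separately, but each side is a strictly smaller instance. We analyze the total work via the size measure, so it stays polynomial.

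\textbf{Step 3: base case.} At $n$ small enough, say $n\le 2$, we brute-force all inputs.

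\textbf{Main obstacle.} The hardest step is Step 2. We must show that "locally looks optimal at every elimination step" together with "correct on the recursive checks" implies global correctness. This requires reproving the characterization of \cite{CarmosinoDJ2025} algorithmically, in particular handling gadgets whose outputs are reused by later gates. We would first try to prove the invariant directly: after verifying the top gadget, $\cC$ equals $G(x_i,\cC')$ for a subcircuit $\cC'$ of size $3(n-2)$ on the remaining variables. Then only one recursion is needed, on $\cC'$ against $\XOR_{n-1}$ or its negation, and this gives a linear-depth, polynomial-time algorithm.
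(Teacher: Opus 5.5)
\textbf{There is a genuine gap: Step~2 never says how to pick the wrong branch in polynomial time.} Step~2 is where the whole difficulty of a checker lies, and it is not resolved. Since $\cC$ has size exactly $3(n-1)$, the search problem is no longer total. After a restriction $x_i\gets c$ that leaves exactly $3(n-2)$ gates, the restricted circuit may compute the correct parity (for example, if $\cC$ errs only on inputs with $x_i=1-c$). Then no recursion on it can return an error. So at every level you must decide in polynomial time which restriction is wrong, and your proposed mechanisms do not do this:
\begin{itemize}
\item Evaluating the two restrictions on a few strings does not certify that they are complementary. You give no argument that an error must appear on the strings you pick.
\item Recursing on both sides gives $T(k)=2T(k-1)+\mathrm{poly}(k)$, which is exponential whatever size measure you use.
\item The decomposition $\cC=G(x_i,\cC')$ is false in general. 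Elimination starts from the topologically minimal gate, and in, say, a balanced tree of widgets no input feeds the output widget.
\end{itemize}

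\textbf{Step~1 has a related flaw.} An incorrect circuit of size $3(n-1)$ can have arbitrary non-widget local structure in which the fixing substitution removes exactly three gates. So ``local mismatch implies at least four gates removed'' is unjustified, and you cannot hand that restriction to the subcritical refuter.

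\textbf{The missing ingredient is a polynomial-time \emph{detector}.} Given a normalized circuit of size exactly $3(k-1)$, it decides whether the circuit computes $\XOR_k$, $\neg\XOR_k$, or neither. The paper builds it from the characterization of \cite{CarmosinoDJ2025}: partition the circuit into $(\neg)\XOR_2$ widgets and check each against a hardcoded list. The converse (any such widget tree computes $(\neg)\XOR$) rules out false positives.

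With the detector, a single recursion path suffices. After Schnorr's Cases 1--4, set $x_p$ to fix $\beta$.
\begin{itemize}
\item If more than three gates disappear, call the refuter.
\item If the detector says the restriction is wrong, recurse on it.
\item Otherwise the error lies on the other side, so flip $x_p$.
\end{itemize}

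\textbf{The flipped substitution needs its own argument.} It removes $\beta$ only by a passing rule and might eliminate just two gates, which breaks your size invariant. The paper shows this happens only when $\alpha$ feeds nothing but $\beta$. In that case the two-gate subcircuit at $\beta$ computes a non-parity function of $x_p,x_q$ and can be replaced by a single gate. The resulting subcritical circuit then goes to the refuter.
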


\subsection{Related Work}

\subparagraph*{Feasible Mathematics and Bounded Arithmetic.} 
One way to obtain refuters is to prove complexity separations in \emph{bounded arithmetics:} weak fragments of Peano Arithmetic that enjoy tight connections to efficient algorithms.  In particular, these theories have \emph{efficient witnessing} where proofs of existential statement can be automatically converted into efficient algorithms that print the objects asserted to exist.  Some of the lower bounds for \emph{restricted} circuit classes, like $\AC^0$, have been formalized in bounded arithmetic and so have efficient refuters \cite{DBLP:journals/apal/MullerP20}.  To the best of our knowledge, no circuit lower bound via gate elimination has been formalized in bounded arithmetic.

Grosser and Carmosino showed that many of the dramatic implications that come from making simple and weak lower bounds constructive in the \emph{algorithmic} sense still hold (in slightly weaker forms) when we consider proofs in \emph{weak fragments of Peano Arithmetic} closely associated with computational complexity classes, like Cook's $\lang{PV}$ (which is connected to deterministic polynomial time) \cite{GrosserC25}.

\subparagraph*{Lower Bounds for $\MUX$.}
An asymptotically tight circuit lower bound for $\MUX$ over the DeMorgan basis is stated by \cite{LozhkinK2021}, but we cannot locate a published proof.  Tight lower bounds are required to characterize the set of optimal $\MUX$ circuits and derive an efficient checker. 
\subparagraph*{Checking.}
Our checker for $\XOR$ is like a white-box and deterministic version of a \emph{program checker} introduced by Blum and Kannan in the 90s \cite{BlumK95}. In particular, 
a program checker is a \emph{probabilistic} algorithm that--- given a \emph{deterministic} program $P$, an input instance $I$--- certifies with high probability whether the output produced by $P(I)$ is correct or instead declares the program $P$ to be buggy. A central conceptual contribution of this work is the insight that the structural properties of the underlying problem can be leveraged to make checking efficient: by querying the program on carefully chosen related instances, one can test consistency conditions that any correct computation must satisfy.

Blum and Kannan provided concrete checkers for some explicit polynomial time computable problems such as sorting, computing the matrix rank, the greatest common divisor, and for several group-theoretic problems.
Our results extends this line of research to the Boolean circuits' regime by showing how one can utilize the known characterization of optimal $\XOR$-circuits to efficiently and \emph{deterministically} check whether a given circuit $\cC$ (of size $3n - 3$) correctly computes $\XOR_n$.

\subsection{Proof Techniques}

\subsubsection{Elementary $\MUX$ Lower Bound}
Paul’s classical argument for a $2(N - 1)$ $\mathbb{B}_2$ lower bound for $\MUX_n$ only restricts individual \emph{data bits} \cite{Paul1975}. As a result, the restricted circuit must still multiplex the remaining data bits. To ensure that at least two gates are always eliminated, Paul's substitutes \emph{functions} of variables, rather than constants, in order to fix $\oplus$-type gates. As these substitutions may make certain address bits degenerate in the restricted function, the proof cannot manipulate the address bits to eliminate more gates. In Section \ref{sec:lower-bound} however, we show that the argument becomes significantly simpler in the DeMorgan basis. In $\mathbb{D}$, constant substitutions suffice to eliminate two gates per data bit. After substituting all of the data bits which agree on a certain address bit $a_i$, the underlying function is guaranteed to still depend on $a_i$. We can therefore substitute $a_i$ to eliminate even more gates. This accounts for an extra $\log N$ factor in the $2N + \log N - 2$ lower bound established in Lemma \ref{lem:MUX-CktLB}. 

\subsubsection{Efficient Refuters}
Gate elimination often involves arguments that enough specific simplifications can be applied after certain substitutions. If not, the circuit must violate a structural or functional property of its underlying function. After simplification, the circuit computes a smaller instance of the function (or a smaller instance in the function class) being considered and the lower bound follows inductively. 

In Section \ref{sec:refuters}, view proofs via gate elimination as \emph{structured algorithms}. We run the arguments on circuits which \emph{are} to small to compute the function by interpreting the case analysis as a branching algorithm. As the circuits do not meet the lower bound, this process \emph{must} either fail at some step or efficiently reduce the circuit to constant size. If the algorithm fails, then we can extract a counterexample from the violated structural or functional property. If the algorithm reaches a constant-size restriction, the refuter can brute force over all remaining inputs and find one on which it errs. The algorithms for $\MUX$ and $\XOR$ in Sections \ref{sec:mux-refuter} and Appendix \ref{sec:xor-refuter} respectively follow this structure exactly and are able to extract counterexamples efficiently. The affine refuter of Section \ref{sec:affine-refuter} also mirrors this structure and produce affine subspaces on which the circuits are constant.

\subsubsection{Efficient Optimal $\XOR$-circuits Checker}
While gate elimination arguments naturally extend to refuters they do not as readily produce checkers -- algorithms which find errors for incorrect circuits whose size \emph{meets} the lower bound. Generally, the issue with naively running a refuter as checker is that a ``successful'' substitution by the checker (i.e., one that succeeds at reducing the complexity of the circuit by the minimal amount) may produce a circuit which \emph{does} correctly compute the corresponding restricted function. The original circuit may have only erred on inputs where the variable was set according to a different value and therefore a checker will then never be able to find an error. 

In Section \ref{sec:xor-checker}, we show that it is sometimes possible to circumvent this by extending a refuter for $\XOR$ into a checker. To do so, we efficiently \emph{detect} when the reduced circuit is correct, so that the algorithm can instead make the opposite constant substitution. We leverage the characterization of optimal DeMorgan $\XOR$ circuits as binary trees of $\XOR_2$ and $\neg \XOR_2$ widgets from \cite{CarmosinoDJ2025}. As the converse of this theorem also holds (i.e., binary trees of $\XOR_2$ and $\neg \XOR_2$ widgets compute $\XOR$ (or $\neg \XOR$)), the checker (Algorithm \ref{alg:xor-checker}) can detect whether the simplified circuit computes $\XOR$ in polynomial time and backtrack one step if necessary. Afterwords, all that is required is to argue that the opposite substitution also eliminates enough gates or otherwise violates some property of $\XOR$.

\subsection{Future Directions}

These constructivity results for $\XOR$, $\MUX$, and an \emph{elementary} argument about affine dispersers immediately raise further questions: what about state-of-the-art circuit lower bounds?  Do the more sophisticated complexity measures impede refutation by efficient algorithms?  Are there dramatic consequences of making those arguments constructive? We also ask whether our refuter for $\MUX$ and the affine refuter can be extended to a checker and improved to a constructive separation respectively. Are there any consequences if there constructive separations of affine functions $\func{f}$ of dimension $d$ from size $3n - 4d(n)$?

Finally, it would be interesting to determine if gate elimination can be formalized in bounded arithmetic.
This would establish that not only is the refuter efficient, but also that the proof of correctness for the refuter is feasible. We conjecture that all the arguments from this paper can be formalized in $\mathsf{PV}$, which seems to contain enough linear algebra to argue about affine dispersers.


\section{Preliminaries}

\subsection{General Notation}
We write $[n]$ to represent the set $\{1, 2, \ldots, n\}$. We write addition modulo 2 over the field $\F_2 = \{0,1\}$ explicitly using $\oplus$. Vectors are written with an arrow e.g., $\vec{v} \in \F_2^n$, and addition between vectors $\vec{v} + \vec{u}$ denotes component-wise addition modulo 2. We write $\vec{e_i}$ for the $i^{\text{th}}$ standard basis vector, i.e, the vector whose $i$-th coordinate is $1$ and whose other coordinates are $0$. The all zero vector is $\vec{0}$, the $n \times n$ identity matrix is $\mathbf{I}_n$, and the $n \times n$ all-zeroes matrix is $\mathbf{0}_n$. For a set $I \subseteq [n]$, we write $\ind{I} \in \F_2^{1 \times n}$ for the row indicator vector of $I$, i.e., the row vector whose $i$-th coordinate is $1$ if and only if $i \in I$.

\subsection{Boolean Circuits}

We assume familiarity with the basics of Boolean circuit complexity; see \cite{Jukna} for a standard reference. We consider unrestricted circuits over the \emph{DeMorgan} basis, $\mathbb{D} =\{\land, \lor, \neg\}$, and the $\mathbb{B}_2$ basis consisting of all Boolean functions of fan-in at most $2$. As in \cite{DemenkovK11}, we divide the $16$ binary Boolean functions $\func{f}(x,y)$ of $\mathbb{B}_2$ into three categories: six degenerate functions (i.e., $0, 1, x, \neg x$, $y$, $\neg y)$, eight $\land$-type (i.e., $x \land y, \neg x \land y, \ldots, \neg(\neg x\land \neg y))$, and two $\oplus$-type (i.e., $x \oplus y$ and $\neg (x \oplus y)$). Observe that an optimal circuit computing a function over $\mathbb{B}_2$ never contains degenerate gates unless they're computing the degenerate functions themselves. 

The size of a circuit $\cC$, denoted $\sigma(\cC)$, is the number of binary gates in $\cC$. We denote by $\eta(\cC)$ the number of input variables in $\cC$ whose fanout is at least one. The circuit complexity of a function $\func{f} : \F_2^n \to \F_2$ over a basis $\mathbb{B}$, denoted $\func{CC}_\mathbb{B}$, is the minimum size of any $\mathbb{B}$-circuit computing $\func{f}$. When working with circuits over the DeMorgan basis, we will often write $(\neg)\alpha$ to refer to either a binary gate $\alpha$ or a $\neg$ gate reading $\alpha$.

\subsection{Gate Elimination}
Lower bounds for general Boolean circuits use the  \emph{gate elimination} method: first, fix some inputs to some value, then repeatedly simplify the resulting circuit. Simple proofs via gate elimination substitute constants and then simplify according to basic Boolean identities. Each simplification either replaces a gate by a constant or merges it into one of its inputs. We categorize these identities into the following three types:
\begin{itemize}
    \item \emph{Fixing} rules: directly turn a gate into the constant $0$ or $1$
    (e.g.\ $0\land\gamma \to 0$).
    \item \emph{Passing} rules: replace a gate by one of its inputs
    (e.g.\ $1\land\gamma \to \gamma$), letting that input inherit the outgoing
    wires of the gate.
    \item \emph{Trivial simplifications}: eliminate a literal and its negation
    (e.g.\ $\gamma \land \lnot\gamma \to 0$)
    or remove duplicate inputs
    (e.g.\ $\gamma \land \gamma \to \gamma$) and double negations.
\end{itemize}

We apply these simplifications repeatedly until we can no longer. We call such a circuit, in which no further identities can be applied, \emph{normalized} or \emph{in normal form}. Note that when working in $\mathbb{B}_2$, no single constant substitution can fix an $\oplus$-type gate. Hence, more advanced gate elimination arguments substitute functions. Examples of these substitutions are given in Figure \ref{fig:gate-elim-example}. While simplifying, gate elimination arguments track certain complexity measures; basic arguments simply track circuit size, thereby concluding that the original circuit had at least as many gates as were removed. More advanced gate elimination arguments, like \cite{DemenkovK11}, track additional quantities, such as $\eta(\cC)$.

\begin{figure}[t]
    \centering
    \captionsetup[subfigure]{justification=centering}
    \captionsetup[subfigure]{labelformat=empty}
    \begin{subfigure}{\textwidth}
        \centering
        \includegraphics[]{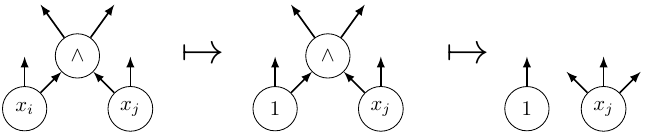}
        \caption{(a) A constant substitution and accompanying passing rule}
    \end{subfigure}
    \begin{subfigure}{\textwidth}
    \centering
        \includegraphics[]{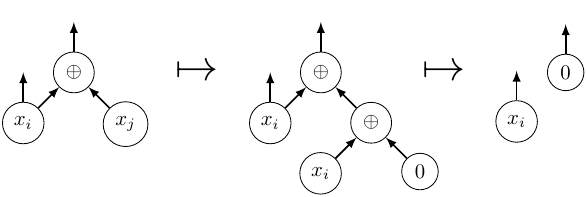}
        \caption{\hspace{-0.5cm}(b) A linear substitution and accompanying fixing rule}
    \end{subfigure}
    \caption{Two examples of simplifications in $\mathbb{D}$ and $\mathbb{B}_2$. In (a), $x_i$ is substituted for a constant which simplifies $\land$, passing its out wires to $x_j$. In (b), $x_j$ is substituted with the linear equation $x_j \gets x_i \oplus 0$. The top $\oplus$ gate therefore computes $x_i \oplus (x_i \oplus 0) = 0$ and thus becomes fixed.}
    \label{fig:gate-elim-example}
    \end{figure}

\subsection{Specific Functions and Function Classes}
We consider the following functions and function class.

\begin{definition}[$\XOR$]
\label{def:xor}
    For $\vec{x} \in F_2^n$, we define the $\XOR$ function on $n$ bits ($\XOR_n$) as
    \(
        \XOR_n(\vec{x}) = \bigoplus_{i \in [n]} x_i.
    \)
    For a set $I \subseteq [n]$, we define $\XOR_I(\vec{x}) = \bigoplus_{i \in I}x_i.$ 
\end{definition}
We use $(\lnot)\XOR$ to denote the $\XOR$ function itself or its negation $\lnot \XOR$.

\begin{definition}[$\MUX$]
\label{def:mux}
    We define the multiplexer $\MUX_n: \F_2^{n} \times \F_2^{2^n} \to \F_2$ as follows. For all $(\vec{a}, \vec{x})$ where $\vec{a} \in \F_2^n, \vec{x} \in \F_2^{2^n}$, we have
    $
        \MUX_{n}(\vec{a}, \vec{x}) = \vec{x}_{(\vec{a})}
    $,
    where $\vec{x}_{(\vec{a})}$ where $(\vec{a})$ is $\vec{a}$ interpreted as a base-2 number plus $1$, i.e., the $((\vec{a}) + 1)$-th bit of $\vec{x}$.
\end{definition}

\begin{definition}[Affine Subspace and Dispersers]
        A set $S \subseteq \F_2^n$ is an \emph{affine subspace of dimension $d$} if there exists a full column rank matrix $A \in \F_2^{n \times d}$ and vector $\vec{a} \in \F_2^d$ such that $S = \{A\vec{x} + \vec{a} \mid \vec{x} \in \F_2^n\}$. A function $\func{f} : \F_2^n \to \F_2$ is an \emph{affine disperser of dimension $d$} if $\func{f}$ is not constant on any affine subspace of dimension at least $d$.
\end{definition}

\section{An Elementary Lower Bound for $\MUX$}
\label{sec:lower-bound}
Paul's $2N-2$ lower bound for $\MUX_n$ in the $\mathbb{B}_2$ basis \cite{Paul1975} propagates to the DeMorgan basis. However, in the DeMorgan basis, the argument becomes simpler (and tighter) as constant substitutions suffice to restrict data bits and therefore the argument can take advantage of address bits. Substituting for an address bit makes half of the data bits irrelevant, allowing us to substitute irrelevant data bits to eliminate as many gates as possible. 

To streamline the proof of correctness for our refuter (Theorem \ref{thm:refuter-our-lb}), we present the proof of Theorem \ref{thm:improved-MUX-lb} in a labeled, case-based format.  The structure of the proof aligns closely with the structure of the refuter; labeling allows us to reference corresponding steps of the proof when analyzing the refuter. Cases will correspond to branches of our program. When Cases are proofs by contradiction, we label their conclusion as Assertions.

\begin{theorem}
\label{thm:improved-MUX-lb}
    In the DeMorgan basis, $\func{CC}(\MUX_n) \geq 2N + \log N - 2$ where $N = 2^n$.
\end{theorem}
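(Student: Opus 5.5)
We argue by gate elimination over $\mathbb{D}$, with size counted as the number of binary gates; negations are free and are absorbed into gates when we simplify. The plan is to remove the data bits in blocks determined by one address bit, substitute that address bit, and recurse on the surviving sub-multiplexer. We would induct on $n$ and prove that every normalized DeMorgan circuit computing $\MUX_n$ (or a $\MUX_n$ whose data bits are renamed or negated) has at least $2N + n - 2$ binary gates. For the base case $n = 0$ the bound is $2 \cdot 1 + 0 - 2 = 0$, which is trivial. For $n = 1$ the bound is $3$: the function $\MUX_1(a,x_1,x_2)$ depends on all three variables and is not a function of only two gates, so at least three gates are needed. We would confirm this with a short case check or by running the general step below.

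\textbf{Key step: each data bit costs two gates.} Take a circuit $\cC$ computing $\MUX_n$ and any data bit $x_j$ that is still relevant. First we show $x_j$ has fan-out at least $2$ in normal form. Suppose instead that $x_j$ feeds a single gate $g = x_j \circ h$ (up to negations). Fixing $x_j$ to the controlling constant of $\circ$ turns $g$ into a constant independent of $h$. Now take the address $\vec a$ selecting $j$. Every path from $x_j$ to the output passes through $g$, so $g$ is sensitive to $x_j$ there, and hence $h$ must be non-controlling under that address. Because $h$ does not depend on $x_j$, we can use the freedom in the other data bits and the structure of $\MUX$ to derive a contradiction. The plan is to show $h$ must then be non-controlling under every address, since under other addresses $x_j$ is irrelevant yet $g$ passes $x_j$ through. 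Where $x_j$ is irrelevant, a constant $g$ would also be harmless, so this gives no contradiction on its own. The contradiction would instead have to come from considering a second data bit $x_k$: it also needs a path that avoids $g$ or depends on $g$. I expect this to be the fiddly part. A cleaner route is the standard one: substitute $x_j \gets c$, where $c$ is the controlling value for the gate $g$ that $x_j$ feeds. This fixes $g$ to a constant, and that constant then propagates into a successor $g'$ of $g$, eliminating at least two gates. The exception is when $g$ is the output, which cannot happen while other data bits remain relevant. The delicate case is when $g'$'s simplification is only a pass rule; passing still removes $g'$, so we still get two gates. If $x_j$ has fan-out $\ge 2$, we fix it to the controlling value of any one successor, which removes both successors directly.

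\textbf{Counting.} Fix address bit $a_1$. Substitute, one at a time, all $N/2$ data bits $x_j$ whose index has $a_1 = 1$. Each substitution eliminates at least two gates. After every substitution, the restricted function is a multiplexer with some data bits replaced by constants, so the two-gate claim still applies to the remaining relevant bits of this half. After the whole half is fixed to constants, the function still depends on $a_1$: at the point where some data bit with $a_1 = 0$ differs from its partner constant, flipping $a_1$ changes the output. Hence $a_1$ still feeds some gate, and substituting $a_1 \gets 0$ eliminates at least one more gate. What remains is a circuit computing $\MUX_{n-1}$ on the half with $a_1 = 0$. This gives the recursion $S(n) \ge N + 1 + S(n-1)$, and summing it yields $2N + n - 2 = 2N + \log N - 2$.

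\textbf{Main obstacle.} The hardest part is proving that every data-bit substitution eliminates two gates even after earlier substitutions have turned the function into a partially constant multiplexer. In particular we must handle the case where $x_j$'s single successor chain collapses by a pass rule. My first attempt would be an exchange argument: if only one gate disappeared, the circuit would compute $x_j$ combined with a function that ignores the address bits. That contradicts the requirement that $x_j$ matters only when the address equals $j$, while other unfixed data bits still matter for other addresses.
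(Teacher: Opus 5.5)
Your ``cleaner route'' is the paper's proof: fix each upper-half data bit to the controlling value of a gate $g$ it feeds ($g$ cannot be the output, since the free lower half keeps the restricted function non-constant), so $g$ and its binary successor are both eliminated whether that successor is fixed or passed; then $a_1$ is still relevant, so $a_1 \gets 0$ eliminates one more gate, and you recurse on $\MUX_{n-1}$ via $S(n)\ge N+1+S(n-1)$. Drop the fan-out-$\ge 2$ detour, which is false for $\MUX$ (in the optimal circuit $(\neg a\land x_1)\lor(a\land x_2)$ each data bit feeds a single gate); your ``main obstacle'' is already settled by your own pass-rule remark, because each substitution acts on the already-simplified circuit, so the eliminated gates are distinct.
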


\begin{proof}
    We will show that $\func{CC}_{\mathbb{D}}(\MUX_n) \geq 2 \cdot 2^n + n - 2$. We exploit the following observation about $\MUX_{n}$: restricting $a_1$ removes dependence on the half of the data bits. Hence, we can restrict those degenerate data bits to remove as many possible gates as possible. Afterwards, by downward self-reducibility, the circuit is a multiplexer on the remaining data bits and we can apply induction. 
    
    \subparagraph*{Base Case.} Observe that $\MUX_1$ requires at least $3 = 2 \cdot 2^1 + 1 - 2$ costly gates. $\MUX_1$ depends on all of its inputs: $a_1, x_1, x_2$ and thus there are at least two binary gates which read them. Furthermore, none of these inputs can feed the output as otherwise, substituting them to fix the output makes the circuit constant, but $\MUX_1\hook_{v \gets b}$ is not constant for any variable $v$ and constant $b$. Therefore there must be at least one more gate which reads the gates that read $a_1, x_1, x_2$.

    \subparagraph*{Inductive Case.}
    Let $\cC$ be a circuit computing $\MUX_n$ for $n > 1$. Let $D = \{x_i \mid i > 2^{n-1}\}$, i.e., the data inputs whose first address bit is $1$. We will  find a substitution for each $x_i$ in $D$ (i.e., exactly half of the data inputs of $\cC$) so that two binary gates are eliminated. 
    
    \begin{description}
    \item[Loop over $D$.] Fix $x_i \in D$. We will substitute for $x_i$ to eliminate at least $2$ binary gates.
    \begin{description}
        \item [Case 1: Data degeneracy.] Suppose $\cC$ does not read $x_i$. Then, $\cC$ does not depend on $x_i$ while $\MUX_n$ does. Hence, $\cC$ cannot compute $\MUX_n$.
        \item [Assertion 1.] $\cC$ reads $x_i \in D$ at least once.

    \item [Work: Capture Minimal Gate.] Let $\alpha_i$ be any gate reading $x_i$.
    
    \item [Case 2: Short-Circuit at $\alpha_i$.] Suppose $(\neg)\alpha_i$ is the output gate. Setting $x_i \leftarrow b$ which eliminates $\alpha_i$ via a fixing rule makes the circuit constant. However, $\MUX$, restricted with any set of substitutions to $D$ is not constant (as the complement of $D$ is not empty, and those data bits could be selected by the address).

    \item [Assertion 2.] The output of the circuit is not $(\neg)\alpha_i$.

    \item [Work: Eliminate Gates via Data Bit Restrictions.] Let $\beta_i$ be another binary gate reading $\alpha_i$. Setting $x_i$ to fix $\alpha_i$ eliminates at least $\alpha_i$ and $\beta_i$. Since $|D| = 2^{n - 1}$, processing all data bits in $D$ eliminates at least $2 \cdot 2^{n-1} = 2^n$ gates.
    \end{description}
    
    \item [Case 3: Address degeneracy.] After substituting all $2^{n-1}$ variables in $D$, suppose that the resulting circuit does not read $a_1$. Then the circuit output is independent of $a_1$, whereas the restricted multiplexer still depends on $a_1$. In particular, it determines whether the selected data bit lies in the free lower or fixed upper half. Hence, $\cC$ cannot compute $\MUX_n$.

    \item [Assertion 3.] After substitutions of the data variables in $D$, the circuit reads $a_1$ at least once.

    \item [Work: Eliminate One More Gate.] Substitute $a_1 \gets 0$ and simplify. The resulting circuit has lost at least one more binary gate. 
    
    \item [Inductive Hypothesis.] The simplified circuit now is a multiplexer on the remaining $2^{n-1}$ inputs which are addressed by $\{a_2, \ldots a_n\}$. Thus, we obtain
    \begin{align*}
        \sigma(\cC) &\geq \sigma(\MUX_{n-1}) + 2\cdot 2^{n-1} + 1 \\ &\geq (2 \cdot 2^{n-1} + (n-1) - 2) + 2\cdot 2^{n-1} + 1 \\ &= 2 \cdot 2^n + n - 2\\ &= 2N + \log N - 2. \qedhere
    \end{align*}
    \end{description}
\end{proof}

\section{Efficient Refuters from Gate Elimination}
\label{sec:refuters}

\subsection{A Refuter for $\MUX$}
\label{sec:mux-refuter}

We now show that our elementary lower bound for $\MUX$ is constructive by exhibiting an efficient refuter against circuits that violate our lower bound (Algorithm \ref{algo:mux-refuter-alt}).

\begin{algorithm}[t]
  \caption{Refuter for $\MUX_n$ for circuits of size less than $2N + \log N - 2$}
  \label{algo:mux-refuter-alt}
  \begin{algorithmic}[1]
    \Require $\cC$ is a normalized circuit, $m \in [n]$, $\vec{d} \in \F_2^N$ such that $\mu(\cC) < 2 \cdot 2^m + m - 2$ and $\cC$ only reads from $\{a_{n-m+1}, \ldots a_n\}$ and $\{x_i \mid \forall i\in [n], i < 2^m\}$
    \Ensure $\vec{w}^A\in \F_2^{\log N}, \vec{w}^X \in \F_2^N$ such that $\cC(\vec{w}^A, \vec{w}^X) \neq \vec{w}^X_{\Call{int}{\vec{w}^A}}$ and $\forall i \geq m, \vec{w}^A_i = 0$ and $\vec{w}^X_j = \vec{d}_j$ for all $j > 2^m$
    \Procedure{MUX-REFUTER}{$\cC, m, \vec{d}$}
        \If{$m = 1$}
            \LComment{Base Case: $\cC$ errs on at least one of these $4$ assignments as it is too small}
            \State \textbf{compute} $\cC$ on $(\Call{bin}{1}, \vec{d}^X), (\Call{bin}{1}, \vec{d}^X + \vec{e}_1), (\Call{bin}{2}, \vec{d}^X),$ and $(\Call{bin}{2}, \vec{d}^X + \vec{e}_2)$
            \State \Return an input on which $\cC$ does not agree with $\MUX$
        \EndIf
        \State $D \gets \{x_i \mid 2^{m-1} < i \leq 2^m\}$

        \For{each $x_i \in D$}
            \If{$\cC$ does not read $x_{i}$}
                \LComment{Case 1: $\cC$ is degenerate with respect to $x_{i}$ but the restricted $\MUX$ is not}
                \State \textbf{compute} $\cC$ on $(\Call{bin}{i}, \vec{d})$ and $(\Call{bin}{i}, \vec{d} + \vec{e}_i)$
                \State \Return whichever input $\cC$ does not agree with $\MUX$ on
            \EndIf
            \State $\alpha \gets$ topological minimal gate reading $x_{i}$
            \State $\vec{d}_{i} \gets$ constant whose substitution for $x_{i}$ fixes $\alpha$
            \If{$(\neg) \alpha$ is the output gate of $\cC$}
                \LComment{Case 2: The restricted circuit with $x_i \gets \vec{d}_i$ is constant but $\MUX$ is not}
                \State \textbf{compute} $\cC$ on $(\Call{bin}{2^{m-1}}, \vec{d})$ and $(\Call{bin}{2^{m-1}}, \vec{d} + \vec{e}_{2^{m-1}})$
                \State \Return whichever input $\cC$ does not agree with $\MUX$ on
            \EndIf
            \State $\cC \gets \cC$ with $\vec{d}_{i}$ substituted for $x_{i}$ and then simplified
        \EndFor

        \If{$\cC$ does not read $a_{n-m+1}$}
            \LComment{Case 3: $\cC$ is degenerate with respect to $a_{n-m+1}$ after setting all $x \in D$}
            \State $d_{1} \gets d_{2^{m-1}+1} + 1$
            \State \textbf{compute} $\cC$ on $(\Call{bin}{1}, \vec{d})$ and $(\Call{bin}{2^{m-1}+1}, \vec{d})$
            \State \Return whichever input $\cC$ does not agree with $\MUX$ on
        \EndIf
        \State $\cC \gets \cC$ with $a_{n-m+1}$ substituted by $0$ and simplified
        \State \Return \Call{MUX-REFUTER}{$\cC, m-1, \vec{d}$}
    \EndProcedure
  \end{algorithmic}
\end{algorithm}

\begin{theorem}
  \label{thm:refuter-our-lb}
    Let $C$ be a DeMorgan circuit on $n + N$ inputs, where $N = 2^n$, with size $s < 2N +\log N - 2$ that claims compute $\MUX_n$. We can construct an input $(\alpha, \rho)$ with $\alpha \in \{0, 1\}^n$, $\rho \in \{0, 1\}^{N}$, and $N = 2^n$, such that $C(\alpha, \rho) \neq \MUX_n(\alpha, \rho)$ in polynomial time with respect to $N$.
\end{theorem}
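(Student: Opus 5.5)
We prove that Algorithm~\ref{algo:mux-refuter-alt}, started as $\Call{MUX-REFUTER}{\cC, n, \vec{0}}$ on the normalized version of $C$, is correct and runs in polynomial time. The argument is by induction on $m$ and follows the labeled cases of the proof of Theorem~\ref{thm:improved-MUX-lb} step by step. We maintain the following invariant at every recursive call with parameter $m$. The current circuit $\cC$ is normalized. It reads only the last $m$ address bits $a_{n-m+1},\dots,a_n$ and the data bits $x_1,\dots,x_{2^m}$. It has size strictly less than $2\cdot 2^m + m - 2$. Finally, for every assignment $w$ to the free variables, $\cC(w)$ equals the original $C$ evaluated on $w$ extended by the fixed address bits ($0$) and the fixed data bits ($\vec{d}$), and $\MUX_m(w)$ equals $\MUX_n$ on that extension. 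The last condition is what lets any counterexample for the restricted instance lift to a counterexample for $C$.

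\textbf{Preserving the invariant.} Normalization, constant substitution and simplification are standard polynomial-time rewrites, and each preserves the computed function on the remaining variables. The address substitution $a_{n-m+1}\gets 0$ makes $\MUX_n$ ignore the data bits in $D$, which were already fixed. So the restricted target really is $\MUX_{m-1}$ on $x_1,\dots,x_{2^{m-1}}$, addressed by the remaining address bits. For the size bound, we repeat the accounting of Theorem~\ref{thm:improved-MUX-lb}:
\begin{itemize}
\item Each iteration over $x_i\in D$ that passes the checks for Cases 1 and 2 fixes the topologically minimal gate $\alpha$ reading $x_i$. Since $(\neg)\alpha$ is not the output, $\alpha$ has a binary successor $\beta$, and $\beta$ is also eliminated. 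This removes at least $2$ gates per iteration.
\item Passing the check for Case 3 means $a_{n-m+1}$ is read by some binary gate, and fixing it removes at least $1$ more gate.
\end{itemize}
Hence the new size is below $2\cdot 2^m + m - 2 - 2^m - 1 = 2\cdot 2^{m-1} + (m-1) - 2$, which is exactly the invariant for $m-1$.

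\textbf{Early returns.} Each early return must produce a genuine error, and we check this against the corresponding Assertion.
\begin{itemize}
\item \textbf{Case 1.} The circuit ignores $x_i$. Fix the address to point at $x_i$; then flipping $x_i$ flips $\MUX$ but not $\cC$, so one of the two evaluated inputs is an error.
\item \textbf{Case 2.} After $x_i\gets \vec{d}_i$, the circuit is constant. Choose an address pointing at a still-free data bit in the lower half and flip that bit. $\MUX$ takes both values while $\cC$ does not, so one input errs.
\item \textbf{Case 3.} The circuit does not read $a_{n-m+1}$. Set a lower-half data bit to differ from its upper-half partner (index $2^{m-1}+1$), and compare the two addresses that differ only in $a_{n-m+1}$. $\MUX$ outputs different values on these two inputs but $\cC$ cannot.
\item \textbf{Base case $m=1$.} The circuit has fewer than $3$ gates. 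The base-case argument of Theorem~\ref{thm:improved-MUX-lb} shows such a circuit cannot agree with $\MUX_1$ everywhere. The four evaluated inputs suffice, because the wrong behaviour is witnessed by a single free variable being ignored or short-circuiting the output. If that is in doubt, we simply enumerate all $8$ assignments instead.
\end{itemize}
The indexing conventions $\Call{bin}{\cdot}$ and the encoding of address bits need care, but they are bookkeeping only.

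\textbf{Running time and main obstacle.} The recursion depth is $n=\log N$. Each level performs at most $2^m$ substitutions, each followed by a polynomial-time simplification and a constant number of circuit evaluations on a circuit of size $O(N)$. The total time is therefore polynomial in $N$. I expect the main obstacle to be the lifting of the counterexample: showing that a wrong answer of the restricted circuit, padded with the fixed address bits and $\vec{d}$, is a wrong answer of the original $C$. This depends on the simplification rules preserving the function exactly, including when passing rules reroute wires. A secondary point is that eliminating $\alpha$ and $\beta$ counts two distinct gates even after earlier simplifications. Both points are handled by the normal-form invariant.
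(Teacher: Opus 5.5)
Your proposal is correct and follows the paper's proof essentially step for step: run \textsc{MUX-REFUTER} by induction on $m$, match each early return to Cases 1--3 of Theorem~\ref{thm:improved-MUX-lb}, charge two gates per bit of $D$ and one for $a_{n-m+1}$, and lift the recursive counterexample through the fixed bits. One correction: your claim that the four base-case inputs suffice is false---with $d_1=d_2=0$, the one-gate circuit $x_1\lor x_2$ agrees with $\MUX_1$ on all four inputs that point the address at $x_1$ or $x_2$ and flip the selected bit, and the recursion actually reaches this situation (e.g.\ for $n=2$ from $\bigl((x_1\lor x_2)\lor(a_1\land a_2)\bigr)\lor\bigl((x_3\land a_2)\lor(x_4\land a_2)\bigr)$, of size $7<8$)---so your fallback of enumerating all $8$ assignments to $(a_n,x_1,x_2)$ is necessary, and it also repairs the paper's algorithm, whose base case uses exactly those four inputs.
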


\begin{proof}
    To begin, we clarify the notation used in the algorithm and its inputs. The procedure \Call{MUX-REFUTER}{$\cC, m, \vec{d}$} operates on a restricted instance of the multiplexer where the least significant $m$ address bits remain unfixed. Concretely, the live address variables are $\{a_{n-m+1}, \dots, a_n\}$, while the remaining address bits have already been fixed to $0$. The vector $\vec{d} \in \F_2^N$ represents an assignment to the data inputs, where the entries $\vec{d}_i$ for $i \leq 2^m$ correspond to the currently live data inputs, while entries for $i > 2^m$ are fixed and will not be modified by the algorithm. 
    
    For an index $i \in [2^m]$, $\Call{bin}{i}$ denotes the binary encoding of $i$ on the $m$ live address bits, i.e., the assignment to $\{a_{n-m+1}, \dots, a_n\}$ that selects the $i$-th live data input. Conversely, for an address vector $\vec{w}^A$, $\Call{int}{\vec{w}^A}$ denotes the integer index selected by that address, so that the multiplexer outputs the data bit $\vec{w}^X_{\Call{int}{\vec{w}^A}}$. For any $i \leq 2^m$, we have $\MUX(\Call{bin}{i}, \vec{d}) = \vec{d}_i.$

    We argue that Algorithm~\ref{algo:mux-refuter-alt} correctly outputs an error by induction on $m \in [n]$, the number of current unfixed address bits. In the base case, when $m = 1$, the algorithm brute forces over the four relevant assignments and returns one on which $\cC$ disagrees with $\MUX$. This is correct by the \textbf{Base Case} of Theorem~\ref{thm:improved-MUX-lb}, since any circuit with fewer than $2\cdot 2^1 + 1 - 2 = 3$ costly gates cannot compute $\MUX_1$.
    For the inductive case, fix some $1 < m \leq n$ and assume that \textsc{MUX-REFUTER} correctly outputs an error on all inputs corresponding to addresses of length $m - 1$. Let 
    $D = \{x_i \mid 2^{m-1}< i \leq 2^m\}.$
    The subsequent branches of the algorithm correspond to the cases identified in the proof of Theorem~\ref{thm:improved-MUX-lb}. If the algorithm does not enter one of these branches, the corresponding assertions from the lower bound proof hold for $\cC$.
    
    \begin{description}
        \item[Branch at step 8] corresponds to Case 1 of Theorem~\ref{thm:improved-MUX-lb}. 
        Suppose the algorithm finds some $x_i\in D$ that $\cC$ does not read. The circuit $\cC$ does not depend on $x_i$, but the current restricted multiplexer does. The algorithm considers the two inputs $(\Call{bin}{i},\vec{d})$ and $(\Call{bin}{i},\vec{d}+\vec{e}_i)$. The circuit outputs the same value on both, while $\MUX$ and is not. Hence one of them is an error, and the algorithm returns it.
    
        \item[Branch at step 14] corresponds to Case 2 of Theorem~\ref{thm:improved-MUX-lb}.
        The algorithm chose the topological minimal gate $\alpha$ reading $x_i$ and found that $(\neg)\alpha$ is the output gate. After setting $x_i$ to a value that fixes $\alpha$, the restricted circuit becomes constant, while the restricted multiplexer is not. The algorithm considers inputs that differ on a data bit still relevant to the multiplexer and returns the one on which $\cC$ disagrees with $\MUX$.
    
        \item[Branch at step 19] corresponds to Case 3 of Theorem~\ref{thm:improved-MUX-lb}.
        Suppose that, after all variables in $D$ have been processed, the algorithm finds that $\cC$ does not read $a_{n-m+1}$. The circuit is independent of $a_{n-m+1}$, but the restricted multiplexer still depends on this bit, since it decides whether the selected data input lies in the lower or upper half. The addresses $\Call{bin}{1}$ and $\Call{bin}{2^{m-1} + 1}$ differ only in $a_{n-m+1}$, so $\cC$ outputs the same value on them. The algorithm sets $\vec{d}_1 = \vec{d}_{2^{m-1} + 1} + 1$, forcing $\MUX$ to evaluate differently on these two addresses. Thus one of the two must be a valid error.
    \end{description}
    
    Otherwise, the algorithm follows the recursive reduction from the \textbf{Inductive Hypothesis} of Theorem~\ref{thm:improved-MUX-lb}. Since the preceding branches were not entered, Assertions 1--3 from the lower bound proof hold. Thus each substitution for $x_i\in D$ eliminates at least two gates, and substituting $a_{n-m+1} \gets 0$ eliminates at least one additional gate. Therefore the circuit $\cC'$ passed to the recursive call satisfies
    \[
        \mu(\cC')
        < \bigl(2\cdot 2^m + m - 2\bigr) - (2^m + 1)
        = 2\cdot 2^{m - 1} + (m - 1) - 2.
    \]
    Moreover, after setting $a_{n-m+1} \gets 0$, the remaining function is the restricted multiplexer on the remaining $m-1$ live address bits and $2^{m-1}$ live data bits. Hence, the recursive call is valid. By the induction hypothesis, it returns an input on which the simplified circuit errs. Since the error only differs on active bits, the same input is also an error for the current circuit $\cC$.
    Thus, Algorithm~\ref{algo:mux-refuter-alt} correctly returns a valid witness $(\vec{w}^A,\vec{w}^X)$ such that
    $ \cC(\vec{w}^A,\vec{w}^X) \neq \vec{w}^X_{\Call{int}{\vec{w}^A}}$ 
    for all $m \in [n]$ by the principle of mathematical induction as desired.
    
    Finally, each recursive level performs only polynomial-time operations: checking whether a variable is read, finding a topological minimal gate, evaluating the circuit on a constant number of inputs, substituting, and simplifying. The recursion depth is at most $n=\log N$, so the total running time is polynomial in $N$. 
\end{proof}

\subsection{An Affine Refuter for Dimension $d$}
\label{sec:affine-refuter}

The $\mathbb{B}_2$ $3n-o(n)$ lower bound of \cite{DemenkovK11} for affine dispersers is constructive. For any $d : \N \to \N$, 

\begin{theorem}\label{thm:affine-
disperser-refutable}
There is an affine refuter for dimension $d$ against $\mathbb{B}_2$ $3n - 4d(n)$ size circuits.
\end{theorem}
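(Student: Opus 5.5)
We will turn the Demenkov--Kulikov gate elimination argument \cite{DemenkovK11} into an algorithm, exactly as was done for $\MUX$ in Algorithm~\ref{algo:mux-refuter-alt}. The algorithm maintains an affine subspace $S$ together with a circuit $\cC$ computing the restriction of the input circuit to $S$. We represent $S$ by a list of linear substitutions $x_{j} \gets \bigoplus_{k\in I} x_k \oplus c$ applied so far, so that the free variables parametrize $S$ and the dimension of $S$ equals the number of remaining free variables. We track the measure $\mu(\cC) = \sigma(\cC) + \eta(\cC)$. Each round makes one affine substitution, which lowers the dimension by one, then normalizes the circuit. Following the case analysis of \cite{DemenkovK11}, each round either decreases $\mu$ by at least $4$, or reaches a situation that the algorithm can handle directly. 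Initially $\mu(\cC) < 3n - 4d + n = 4n - 4d$. The loop stops once the ambient dimension reaches $d$, or earlier if $\mu$ hits $0$, in which case $\cC$ is constant and we simply output the current subspace, padded up to dimension exactly $d$ if needed.

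The key steps, in order, are as follows.

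\begin{enumerate}
\item \emph{Degenerate inputs.} If some free variable $x_j$ is not read by $\cC$, the circuit is constant along the direction $\vec{e}_j$. We keep $x_j$ free and permanently exclude it from further substitution. Its cost is accounted for by the $\eta$ term, so nothing needs to be eliminated.
\item \emph{Normalization.} We verify that normalization in $\mathbb{B}_2$ is polynomial time. This covers removing degenerate gates, merging a gate fed twice by the same wire, and pushing negations, and each of these only decreases $\mu$.
\item \emph{Main case analysis.} We pick a topologically minimal gate $A$ whose two inputs are variables $x_i, x_j$. We then branch on whether $A$ is $\oplus$-type or $\land$-type and on the fanouts of $x_i$, $x_j$, and $A$. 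In each branch \cite{DemenkovK11} prescribes a substitution, either $x_i \gets c$ or $x_i \gets x_j \oplus c$, or a substitution making $A$ or a successor constant. Every such choice is read off the circuit's local structure in polynomial time.
\item \emph{The output case.} If the prescribed substitution would make the output gate constant, there is nothing to contradict, in contrast to Case 2 for $\MUX$. Instead we have found our target: we perform the substitution, and the restricted circuit is constant on a subspace of dimension one less than the current one. If that dimension is at least $d$, we output it, after intersecting with extra coordinate hyperplanes to bring it down to dimension exactly $d$ if necessary.
\end{enumerate}

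\emph{Counting.} Each non-terminal round reduces $\mu$ by at least $4$, so after $n - d$ rounds we would have
\[
\mu < 4n - 4d - 4(n-d) = 0,
\]
which is impossible. Hence a terminal event must occur while the dimension is still at least $d$. The output is therefore always an affine subspace of dimension at least $d$ on which $\cC$ is constant, and we trim it to dimension exactly $d$. The running time is polynomial, since there are at most $n$ rounds of polynomial-time local inspection and substitution.

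\emph{Main obstacle.} The hardest step is making sure every case of \cite{DemenkovK11} is effective and genuinely decreases $\mu$ by $4$ after our concrete normalization. The delicate cases are those where a linear substitution creates new $\oplus$ cancellations or a cycle-like dependency through $A$. There, the original proof argues existentially that "some" gate becomes constant or degenerate, and we must make sure the algorithm finds it by normalizing to a fixpoint. The second subtlety is the slack $4d$ rather than $d$. It absorbs the terminal rounds and the boundary effects of $\eta$ when few variables remain, and I would first try to verify the arithmetic by charging at most $4$ units per lost dimension for the final $d$ dimensions.
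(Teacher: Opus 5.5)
Your skeleton matches the paper's. You run the Demenkov--Kulikov argument as an algorithm and keep the invariant $\mu < 4(D-d)$ for the current dimension $D$, which holds initially because $\eta \le n$. You then conclude that a terminal event must occur while $D \ge d+1$, giving a constant subspace of dimension at least $d$; trimming to exactly $d$ is fine, and the counting is correct.

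The gap is in the dichotomy the counting rests on: every state is either terminal or admits a substitution lowering $\mu$ by $4$. With the case structure you describe, this fails in two ways.
\begin{itemize}
\item \emph{Missing terminal case.} Your only terminal events are ``$\mu = 0$'' and ``the substitution fixes the output gate'', so circuits computing a non-constant affine function are not handled. If the circuit is a single literal ($\mu = 1$, no gates), your loop has no applicable step, and no substitution can lower $\mu$ by $4$.
\item \emph{Wrong gate choice.} Centering the analysis on a bottom gate $A$ fed by two variables, with substitutions $x_i \gets c$ or $x_i \gets x_j \oplus c$, does not guarantee a drop of $4$. Suppose $A = x_1 \oplus x_2$ has fanout $1$ and feeds $G$, while $x_1$ also feeds $E$ and $x_2$ also feeds $F$, with $E,F,G$ distinct $\oplus$-type gates. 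Then each of $x_1 \gets c$, $x_2 \gets c$, $x_1 \gets x_2 \oplus c$, $x_2 \gets x_1 \oplus c$ removes only $A$, one of $E,F,G$, and one variable, so $\mu$ drops by $3$.
\end{itemize}
Your clause ``or a substitution making $A$ or a successor constant'' points the right way, but it leaves the essential choice unspecified. The gate to be fixed may lie arbitrarily far above $A$ along a chain of fanout-$1$ $\oplus$-gates, and the substitution is then a linear form in many variables.

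The missing idea is the organizing device of \cite{DemenkovK11}, which the paper's algorithm follows. Treat every fanout-$1$ $\oplus$-gate as computing a linear form, and let $\alpha$ be the topologically minimal gate that is \emph{not} a fanout-$1$ $\oplus$-gate.
\begin{itemize}
\item If no such gate exists, the circuit computes $\bigoplus_{i\in I} x_i \oplus c$. You stop by intersecting the current space with the hyperplane $\bigoplus_{i\in I} x_i = c$, of dimension $\ge D-1 \ge d$. Together with a direct check for constant circuits, this covers the single-literal case.
\item If $\alpha$ is an $\oplus$-gate of fanout at least $2$, substitute $x_j \gets \bigoplus_{i\in I\setminus\{j\}} x_i \oplus c$ to make $\alpha$ constant. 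This removes $\alpha$, its at least two successors, and $x_j$.
\item If $\alpha$ is an $\wedge$-type gate fed by an $\oplus$-gate $\beta$, make $\beta$ the constant that fixes $\alpha$.
\item If $\alpha$ is an $\wedge$-type gate fed by two variables, use the fanout case analysis of \cite{DemenkovK11}.
\end{itemize}
In the last two cases, either $\mu$ drops by $4$ or the output becomes constant. With these branches your counting goes through unchanged, and it is exactly the paper's lemma applied to $S = \mathbb{F}_2^n$.

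Two smaller points. Your step on unread variables is unnecessary, and its justification is off: an unread variable contributes nothing to $\eta$. The $4d$ slack also needs no extra charging; it is precisely $\eta \le n$ turning $\sigma < 3n-4d$ into $\mu < 4(n-d)$, as your own computation shows.
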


This demonstrates that even gate elimination arguments which use (1) complex measures other than circuit size and (2) substitutions other than simple constant substitutions are constructive. To show this, we prove a stronger property: given any affine space of high enough dimension, we can find a subset of that space on which the circuit is constant. Recall from \cite{DemenkovK11} that $\mu(\cC) = \sigma(\cC) + \eta(\cC)$. We will abuse notation and write $d$ rather than $d(n)$.

\begin{lemma}\label{lemma:affine-subset-refuter}
    Fix $d : \N \to \N$. Let $A \in \F_2^{n \times d'}$ and $\vec{a} \in \F_2^{d'}$ define an affine space $S = \{A\vec{y} + \vec{a}\}$ of dimension $d' > d$. For any circuit $\cC$ such that $\mu(\cC) < 4(d' - d)$, Algorithm \ref{alg:affine-refuter} outputs $\{Wx + \vec{w}\} \subseteq S$ of dimension at least $d$ on which $\cC$ is constant in polynomial times.
\end{lemma}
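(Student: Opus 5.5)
Our plan is to run the Demenkov--Kulikov argument as an algorithm, by induction on $d' - d$. Throughout, we keep the circuit expressed over the free coordinates $\vec{y} \in \F_2^{d'}$ of $S$. Concretely, we precompose $\cC$ with the affine map $\vec{y} \mapsto A\vec{y} + \vec{a}$. Each input $x_j$ is then an affine form in $\vec{y}$, computable by $\oplus$-gates. We must be careful not to count these gates in $\mu$. Equivalently, we keep $\cC$ on the original variables and record the current affine subspace as a list of linear constraints $x_j \gets \ell_j(\text{other variables})$; this is the form of substitution Demenkov and Kulikov use.

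\textbf{Base case.} If $d' - d \le 0$ there is nothing to prove. If $\mu(\cC) = 0$, the circuit has no binary gates and reads no variables, so it is constant (a constant or a $(\neg)$literal of a variable it does not read is impossible). Hence $S$ itself works.

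\textbf{Inductive step.} For $\mu(\cC) < 4(d'-d)$ with $d' > d$, we normalize $\cC$ and follow the case analysis of \cite{DemenkovK11}. If the output is $(\neg)x_j$ or a constant, we are done: add one constraint fixing $x_j$ to get a subspace of dimension $d'-1 \ge d$ on which $\cC$ is constant. Otherwise we take a topologically minimal binary gate $A$ fed by variables $x_i, x_j$ and case on the gate type and the fanouts. If $A$ is $\land$-type, or some variable has fanout $\ge 2$ or feeds an $\oplus$-gate whose successor is fixable, then a constant or linear substitution $x_i \gets c$ or $x_i \gets x_j \oplus c$ reduces $\mu$ by at least $4$. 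This reduction comes from the gates eliminated plus the variable that vanishes from $\eta$. In the remaining case $A = x_i \oplus x_j$ with both fanout $1$, the substitution $x_i \gets A \oplus x_j$ (making $A$ a new variable) is the standard trick. Each substitution is one affine constraint, so we obtain $S' \subseteq S$ of dimension $d'-1$ and a normalized circuit $\cC'$ with $\mu(\cC') \le \mu(\cC) - 4 < 4((d'-1) - d)$. We then recurse. The recursion produces $\{Wx+\vec{w}\} \subseteq S' \subseteq S$ of dimension $\ge d$ on which $\cC'$, and hence $\cC$ restricted to $S$, is constant. Every step is choosing a gate, solving for one variable, substituting and simplifying, and recursion depth is at most $d'$, so the algorithm runs in polynomial time.

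\textbf{Main obstacle.} The hard step is checking that every case of \cite{DemenkovK11} still yields a $\mu$-drop of $4$ per dimension once we are inside an arbitrary affine subspace. The difficulty is that variables are now tied by linear relations. A substitution must be a genuinely new independent constraint, so that the dimension drops by exactly one and not zero, and the chosen variable must still appear in the circuit. We handle this by always rewriting $\cC$ over a basis of free variables of the current subspace, by Gaussian elimination after each step. In that representation the current subspace is the full cube $\F_2^{d'}$, and the original lemma's cases apply verbatim. The only extra thing to check is that the precomposed affine inputs are treated as fresh variables rather than gates.
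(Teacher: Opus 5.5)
Your overall plan is the paper's: run the Demenkov--Kulikov case analysis as an algorithm and record one affine constraint per step. Recording constraints $x_j\gets\ell_j$ works when you start from $S=\F_2^n$, which is the case the affine refuter needs.

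\textbf{The gap.} It is your reduction of an arbitrary $S$ to the full cube. Precomposing $\cC$ with $\vec{y}\mapsto A\vec{y}+\vec{a}$ is not free. A coordinate $x_j$ whose row of $A$ has $k$ ones costs $k-1$ $\oplus$-gates, and $\eta$ now counts the $y$'s that are read. So the hypothesis $\mu(\cC)<4(d'-d)$ says nothing about the rewritten circuit, and your invariant is lost at the first step.

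Declaring the affine forms ``fresh variables rather than gates'' does not rescue this. It just means keeping the $x_j$ as input nodes, and then they are not free coordinates of $S$: they may be linearly dependent, or even constant, on $S$. So the cases do not apply verbatim. For instance, if $\alpha=x_p\land x_q$ and $x_p\equiv 1$ on $S$, the forced substitution $x_p\gets 0$ gives $S\cap\{x_p=0\}=\emptyset$. The same can happen with the value that makes the $\oplus$-tree $\beta$ fix $\alpha$. This empty intersection is the real danger; a dimension drop of zero, which is what you worried about, is harmless. Likewise, your list of constraints $x_j\gets\ell_j$ is cost-free only when $\cC$ reads just the free coordinates, as it does when you start from $\F_2^n$.

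\textbf{How the paper avoids it.} The paper never changes coordinates. It keeps $\cC$ over $x_1,\dots,x_n$ and uses that the drop of $\mu$ by $4$ in each case is a purely syntactic consequence of the substitution, independent of $S$. It tracks the space separately, intersecting it with the hyperplane of each substitution via \textsc{AffineIntersect}. Each step therefore costs at most one dimension, there are fewer than $d'-d$ measure-reducing steps, and the final step (the circuit becoming constant, or cutting a fanout-$1$ $\oplus$-tree) costs at most one more.

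To make that route airtight you still need to handle the degenerate situation above; the paper's write-up does not spell it out. If the relevant affine form is constant on the current space, then either $\cC$ is already constant there, or you substitute the value consistent with the space. That loses no dimension while still decreasing $\mu$.

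\textbf{A second, smaller problem.} Your ``new variable'' move for $A=x_i\oplus x_j$ with both fanouts $1$ is a change of variables, not an affine restriction, and it lowers $\mu$ by only $2$. So it does not fit your ``one constraint, drop of $4$'' step. The paper instead takes the topologically minimal gate that is not a fanout-$1$ $\oplus$-gate, so a whole $\oplus$-tree is absorbed into a single linear substitution.
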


To obtain an affine refuter, simply run Algorithm \ref{alg:affine-refuter} with $A = \mathbf{I}_n$ and $\vec{a} = \vec{0}$. Algorithm \ref{alg:affine-refuter} outputs an affine subspace of $\F_2^n$ of degree at least $d$ on which $\cC$ is constant. The algorithm mirrors the proof of the following theorem from \cite{DemenkovK11}. 

\begin{theorem}[from \cite{DemenkovK11}]\label{thm:affine-lb}
    Let $\func{f} : \{0,1\}^n \to \{0,1\}$ be an affine disperser of dimension $d$ and $S$ be a affine space of dimension $d' > d$. If $\cC$ computes $\func{f}$ on $S$ then $\mu(\cC) \geq 4(d' - d)$.
\end{theorem}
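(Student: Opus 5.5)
We argue by induction on $k = d' - d$, following the gate-elimination scheme of \cite{DemenkovK11} with the measure $\mu(\cC) = \sigma(\cC) + \eta(\cC)$.

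\textbf{Inductive setup.} The claim is trivial when $k \le 0$. For $k \ge 1$, suppose $\cC$ agrees with $\func{f}$ on $S$, where $\dim S = d' > d$. We show there are a circuit $\cC'$ and an affine subspace $S' \subseteq S$ of dimension at least $d'-1$ such that $\cC'$ agrees with $\func{f}$ on $S'$ and $\mu(\cC') \le \mu(\cC) - 4$. The induction hypothesis then gives $\mu(\cC') \ge 4(d'-1-d)$, which is what we need.

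We obtain $S'$ from a single substitution $x_i \gets L$, where $L$ is an affine function of the other variables. Intersecting $S$ with the hyperplane $x_i = L$ drops the dimension by at most one. Concretely, $S' = S \cap \{x_i = L\}$, and $\cC' = \cC\hook_{x_i \gets L}$ after normalization. We take $\cC'$ in normal form without degenerate gates; normalizing never increases $\sigma$ or $\eta$.

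\textbf{Basic facts.} Because $d'-1 \ge d$ and $\func{f}$ is an affine disperser, $\func{f}$ is non-constant on both $S$ and $S'$. Hence:
\begin{itemize}
  \item $\cC$ is not constant;
  \item the output of $\cC$ is never a gate that our substitution makes constant, since otherwise $\cC'$ would be constant on $S'$, a contradiction.
\end{itemize}
This is the only place we use the disperser property.

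\textbf{The gate to eliminate.} Take a topologically minimal binary gate $A$. Its inputs are two distinct variables $x_i, x_j$, since normalization removes repeated inputs. If $A$ is $\land$-type, a constant substitution $x_i \gets c$ fixes $A$. If $A$ is $\oplus$-type, the substitution $x_i \gets x_j \oplus c$ fixes $A$, as in Figure~\ref{fig:gate-elim-example}(b). Either way $x_i$ disappears, so $\eta$ drops by $1$. Every other gate that read $x_i$ can read $x_j$ instead, absorbing the negation into its $\mathbb{B}_2$ gate function, so no new gates are created. $A$ becomes constant and is not the output, so $A$ and at least one successor $B$ of $A$ are eliminated. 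This already gives a drop of $3$.

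\textbf{Finding the fourth unit.} The main work is the case analysis on fanouts that produces one more unit of decrease.
\begin{enumerate}
  \item If $x_i$ has fanout at least $2$, consider its second reader $A'$. Under a constant substitution, $A'$ becomes degenerate and is removed. Under the $\oplus$ substitution, $A'$ now reads $x_j$; if $A'$ also read $x_j$ it collapses, and otherwise we choose which of $x_i, x_j$ to substitute so that this happens.
  \item If both $x_i$ and $x_j$ have fanout $1$, then after the substitution the only reader of $x_j$ was $A$, so $x_j$ loses all its readers. Then $\eta$ drops by $2$, giving $1 + 1 + 2 = 4$.
  \item The remaining configurations are handled by choosing $A$ more carefully: pick a topologically minimal gate fed by a variable of maximal fanout, or pass through $B$ to a third eliminated gate when $B$ is $\land$-type and gets fixed rather than merely passed.
\end{enumerate}

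\textbf{Main obstacle.} I expect the hardest step to be closing the last subcase: $A$ is $\oplus$-type, $x_i$ and $x_j$ each have fanout exactly $2$ with distinct other readers, and $B$ is also $\oplus$-type. Here naive counting gives only $3$. I would first try the Demenkov--Kulikov move of substituting so as to make $B$'s other input equal to a function of $A$, which eliminates $B$'s successor as well. Failing that, I would show this configuration forces some variable to become fanout-zero after the substitution, recovering the missing unit through $\eta$.
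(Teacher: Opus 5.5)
\textbf{There is a genuine gap: the step that produces the fourth unit of decrease is missing, and your item~1 is false in the $\oplus$ case.} Your frame matches the paper: induction on $d'-d$, one affine substitution per step, the disperser property forbidding the output from becoming constant, and a drop of $4$ in $\mu=\sigma+\eta$.

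You only look at a bottom gate $A=x_i\oplus x_j$, and you only allow the substitutions $x_i\gets c$ or $x_i\gets x_j\oplus c$. Suppose the second reader $A'$ of $x_i$ has its other input different from $x_j$. After $x_i\gets x_j\oplus c$, $A'$ just reads $x_j$ and survives; substituting $x_j$ instead leaves $A'$ untouched. So you cannot ``choose which of $x_i,x_j$ to substitute'' to make $A'$ collapse.

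Concretely, take the following configuration:
\begin{itemize}
\item $A=x_1\oplus x_2$ has fanout $1$;
\item $B=A\oplus x_3$ has two readers whose other inputs are not $x_3$;
\item $x_1$ and $x_2$ each have one further reader, an $\oplus$-type gate not involving the other variable.
\end{itemize}
For this $A$, none of $x_1\gets c$, $x_2\gets c$, $x_1\gets x_2\oplus c$, $x_2\gets x_1\oplus c$ need remove more than two gates and one variable, i.e.\ a drop of only $3$. This is exactly your ``main obstacle''. Neither item~3 nor your closing paragraph gives an argument for it, and the hope that some variable loses all its readers fails in this example.

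\textbf{The missing idea (Cases 1--3 of the paper's proof): choose the right gate, and substitute affine functions of many variables.} Let $\alpha$ be the topologically minimal gate that is \emph{not} an $\oplus$-type gate of fanout $1$.
\begin{itemize}
\item If no such gate exists, the circuit computes an affine function. It is then constant on a section of $S$ of dimension at least $d'-1\ge d$, a contradiction.
\item Otherwise, everything feeding $\alpha$ is a tree of fanout-$1$ $\oplus$-gates computing some $\bigoplus_{i\in I}x_i\oplus c$. The substitution $x_j\gets\bigoplus_{i\in I\setminus\{j\}}x_i\oplus c'$ is realised by re-using the gates of that tree, so no gates are created.
\item If $\alpha$ is $\oplus$-type with fanout at least $2$, make $\alpha$ constant. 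Then one tree gate, at least two readers of $\alpha$, and $x_j$ all disappear. In the example, $x_1\gets x_2\oplus x_3\oplus c$ removes $A$, both readers of $B$, and $x_1$.
\item If $\alpha$ is $\land$-type with an $\oplus$-input $\beta$, set $\beta$ to the constant that fixes $\alpha$. Then $\beta$, $\alpha$, a successor of $\alpha$ (not the output, by the disperser property), and $x_j$ disappear. This is the idea you only gesture at with ``pass through $B$''.
\end{itemize}
Only when $\alpha$ is $\land$-type with two variable inputs do you need a constant substitution and a fanout case analysis. That is Case~4, which the paper elides and takes from Demenkov--Kulikov. Even there, your item~1 overlooks that the second reader of $x_i$ may be $B$ itself, in which case it yields no new eliminated gate.
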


We begin by labeling proof of this theorem the proof as in the proof of Lemma \ref{lem:MUX-CktLB}.

\begin{proof}[Proof of Theorem \ref{thm:affine-lb} from \cite{DemenkovK11}]
    Let $\cC$ be a circuit computing $\func{f}$ on an affine space $S = \{A\vec{y} + \vec{a}\}$ of dimension at least $d$. If the dimension is exactly $d$ then the lower bound trivially holds. Otherwise we proceed by induction on the dimension of $S$. Fix $d' \geq d + 1$. 
    
    \begin{description}
        \item[Case 1: Computes $(\neg)\XOR$.] Suppose for the sake of contradiction, $\cC$ only contains fanout $1$ $\oplus$-type gates. Then $\cC$ computes $\bigoplus_{i \in I} x_i \oplus b$ for some $I \subset [n]$ and $c \in \{0,1\}$. However then $\cC$ outputs constant $0$ on $S \cap \{\vec{x} \in \F_2^n \mid \bigoplus_{i \in I} x_i = b\}$, an affine subspace whose dimension is at least $d' - 1 \geq d$
        \item[Assertion 1.] $\cC$ contains a gate which is not a fanout $1$ $\oplus$-type gate.
        \item[Work.] Let $\alpha$ be the topologically minimal gate which is not an $\oplus$-type gate with fanout $1$.
        \item[Cases over $\alpha$.] In each case, we find a substitution that reduces $\mu$ by at least $4$ and the resulting circuit computes $\func{f}$ on an affine space of dimension at least $d'-1$. Note if it computes a circuit on computes $\func{f}$ on an affine space of dimension at least $d'-1$, it computes $\func{f}$ on all subsets of that space with dimension exactly $d' - 1$.
            \begin{description}
                \item[Case 2: $\alpha$ is $\oplus$-type with fanout at least 2.] $\cC$ computes compute $\bigoplus_{i \in I} x_i \oplus b$ for some $I \subseteq[n]$ and $b \in \{0,1\}$. Substituting $x_j \gets \bigoplus_{i \in I \setminus \{j\}} x_i \oplus b$ for some $j \in I$ fixes $\alpha$ to output $0$ and reduces $\eta$ by $1$. Replacing $\alpha$ by $0$, eliminates at least the gates reading $\alpha$ reducing $\sigma$ by at least 3. Let $\cC'$ be the resulting circuit.
                    \begin{description}
                        \item[Case 2.1: Reduce Measure.] The circuit $\cC'$ satisfies $\mu(\cC') \leq \mu(\cC) - 4$ and $\cC'$ computes $\func{f}$ on $S \cap \{\vec{x} \in F_2^n \mid \bigoplus_{i \in I} x_i = b\}$, an affine space of dimension at least $d' - 1$.
                    \end{description}
                \item[Case 3: $\alpha$ is a $\land$-type gate fed by an $\oplus$-type gate $\beta$.] The gate $\beta$ computes $\bigoplus_{i \in I} x_i \oplus c$ for some $I$ and $c$. Let $b$ be the value that fixes $\alpha$. Substituting $x_j \gets \bigoplus_{i \in I \setminus \{j\}} x_i \oplus (b \oplus c)$ fixes $\beta$ to output $b$ which fixes $\alpha$. Let $\cC'$ be the resulting circuit.
                    \begin{description}
                        \item[Case 3.1: $\alpha$ is the output.] Suppose for the sake of contradiction $\alpha$ is the output. The function $\func{f}$ would be constant on $S \cap \{\vec{x} \in \F_2^n \mid \bigoplus_{i \in I} x_i = b \oplus c\}$ which is an affine space of dimension at least $d$. 
                        \item[Assertion 3.1.] $\alpha$ is not the output.
                        \item[Case 3.2: Reduce Measure] The substitution eliminates at least $3$ gates. $\mu(\cC')$ for the resulting circuit $\cC'$ is at least 4 less than $\mu(\cC)$, and $\cC'$ agrees with $\func{f}$ on the affine space $S \cap \{\vec{x} \in \F_2^n \mid \bigoplus_{i \in I} x_i = b \oplus c\}$.
                    \end{description}
                \item[Case 4: $\alpha$ is fed by two variables $x_p$ and $x_q$] Assume the fanout of $x_p$ is at least $x_q$. Let $b$ be the value that fixes $\alpha$. Substituting $x_p \gets \beta$ and simplifying will produce a circuit $\cC'$ where $\mu(\cC') \leq \mu(\cC) - 4$. We elide the complete case analysis here, see \cite{DemenkovK11} for more details. Each subcase (which depends on the fanout of $x_p$, the fanout of $\alpha$, and which specific gates $x_p$ feeds) reduces the complexity by at least four unless $\alpha$ is the output of the circuit. 
                \begin{description}
                    \item[Case 4.1 $\alpha$ is the output.] Suppose for the sake of contradiction $\alpha$ is the output. Then $\func{f}$ would be constant on $S \cap \{\vec{x} \in \F_2^n \mid x_p = c\}$
                    \item[Assertion 4.1] $\alpha$ is the not the output of the circuit.
                    \item[Case 4.2: Reduce Measure] Lastly, $\cC'$ computes $\func{f}$ on $S \cap \{\vec{x} \in \F_2^n \mid x_p = c\}$, an affine space of dimension at least $d' - 1$ and $\mu(\cC') \leq \mu(\cC) - 4$.
                \end{description}
            \end{description}
        \item[Case 5: Invoke Inductive Hypothesis] Each case resolves by reducing $\mu$ by at least $4$ and the simplified $\cC'$ computes $\func{f}$ on an affine subspace of dimension $d' - 1 \geq d$.  By induction, $\mu(\cC) \geq \mu(\cC') + 4 \geq 4((d'-1)-d) = 4(d' - d).$ \qedhere
    \end{description}
\end{proof}

\begin{algorithm}[t]
  \caption{Searches for an affine subspace on which $\cC$ is constant}
  \label{alg:affine-refuter}
  \begin{algorithmic}[1]
    \Require $\cC$ is a normalized circuit, $A \in \F_2^{n \times d'}$, $\vec{a} \in \F_2^n$, such that $A$ is full column rank, $d' > d$, and $\mu(\cC) < 4(d' - d)$ or $\mu(\cC) = 1$.
    \Ensure $W, \vec{w}$ such that $\cC$ is constant on $\{W\vec{x} + \vec{w}\} \subseteq \{A\vec{y} + \vec{a}\}$ and $\dim W \geq d$
    \Procedure{FindConstantSubspace}{$\cC$, $A$, $\vec{a}$}
    \If{$\cC$ is constant}
        \Comment{Case 3.1 and 4.1}
        \State \Return $A$, $\vec{a}$
    \EndIf
    \If{$\cC$ is a circuit of $\oplus$-type gates with fan-out $1$}
        \Comment{Case 1}
        \State $I, c \gets$ \Call{ParitySupport}{$\cC$} \Comment{$\cC$ computes $\bigoplus_{i \in I} x_i \oplus c$}
        \State $R, \vec{r} \gets$ \Call{ConstraintToAffine}{I, c} \Comment{$\{Rz + \vec{r}\} = \{x \in \F_2^n \mid \bigoplus_{i \in I} x_i = c\}$}
        \State \Return \Call{AffineIntersect}{$A, \vec{a}, R, \vec{r}$} \Comment{$C$ is constantly $0$ on this intersection}
    \EndIf
    
    \State $\alpha \gets$ topologically minimal non-fanout $1$ $\oplus$-type gate of $\cC$
    \If{$\alpha$ is an $\oplus$-type gate}
    \Comment{Case 2}
            \State $\calP \gets$ the induced sub-circuit of $\cC$ rooted at $\alpha$
            \State $I, c \gets$ \Call{ParitySupport}{$\calP$} \Comment{$\calP$ computes $\bigoplus_{i \in I} x_i \oplus c$}
            \State $\cS, j \gets$ \Call{FindSubstitution}{$\calP, 0$} \Comment{$\cS = \bigoplus_{i \in I \setminus \{j\}} x_i \oplus c$}
            \State $\cC \gets \cC$ where $x_j$ is substituted with $\cS$, $\alpha$ is replaced by $0$, and then simplified
            \State $R, \vec{r} \gets$ \Call{ConstraintToAffine}{$I, c$}
        \ElsIf{a $\oplus$-type gate $\beta$ feeds $\alpha$}
            \Comment{Case 3}
            \State $\cQ \gets$ induced sub-circuit of $\cC$ rooted at $\beta$
            \State $c' \gets $ the constant value that fixes $\alpha$
            \State $I, c \gets$ \Call{ParitySupport}{$\cQ$} \Comment{$\cQ$ computes $\bigoplus_{i \in I} x_i \oplus c$}
            \State $\cS, j \gets$ \Call{FindSubstitution}{$\cQ, c'$} \Comment{$\cS = \bigoplus_{i \in I \setminus \{j\}} x_i \oplus (c + c')$}
            \State $\cC \gets \cC$ where $x_j$ is substituted with $\cS$, $\beta$ is replaced by $c'$, and then simplified
            \State $R, \vec{r} \gets$ \Call{ConstraintToAffine}{$I, c \oplus c'$}
        \Else
            \Comment{Case 4: $\alpha$ is fed by two variables}
            \State $x_j \gets $ the variable feeding $\alpha$ with maximal fanout
            \State $c' \gets$ the constant value that fixes $\alpha$
            \State $\cC \gets \cC$ where $x_j$ is substituted with $c'$ and then simplified 
            \State $R, \vec{r} \gets$ \Call{ConstraintToAffine}{$\{j\}, c'$}
        \EndIf
        \State $A, \vec{a} \gets$ \Call{AffineIntersect}{$A, \vec{a}, R, \vec{r}$}
        \State \Return \Call{FindConstantSubspace}{$\cC, A, \vec{a}$}
    \EndProcedure
  \end{algorithmic}
\end{algorithm}

We now prove that Algorithm \ref{alg:affine-refuter} correctly outputs an affine subspace of sufficient size on which its input circuit is constant. 

\begin{proof}[Proof of Lemma \ref{lemma:affine-subset-refuter}]
    Fix $d : \N \to \N$. Let $A \in \F_2^{d'}$ and $\vec{a} \in \F_2^{d'}$ define an affine space $S$ of dimension $d' > d$. Let $\cC$ be a circuit where $\mu(\cC) < 4(d' - d)$. We will show that Algorithm \ref{alg:affine-refuter}, \textsc{FindConstantSubspace}($\cC, A, \vec{a})$, outputs a space $S'$ of dimension $\geq d$ such that $S' \subseteq S$ in polynomial time with respect to $n$ and $\cC$ is constant on $S'$. \textsc{FindConstantSubspace} uses four polynomial time subroutines defined in Appendix \ref{sec:affine-subroutines}. We summarize these in Table \ref{tab:affine-subroutines}.

    \begin{table}[h]
    \caption{The four polynomial subroutines used by \textsc{FindConstantSubspace}}\label{tab:affine-subroutines}
\begin{tabular}{|l|l|}
\hline
Procedure                                    & Description                                                                                                                                 \\
\hline
\textsc{ParitySupport}      & Given $\cC$ with only $\oplus$-type gates of fanout 1 \\ & Returns $I, c$ such that $\cC$ computes $\bigoplus_{i \in I} x_i \oplus c$           \\
\hline
\textsc{AffineIntersect}    & Given two affine spaces $S$ and $S'$ of dimension $d'$ and $n-1$ \\ &Returns a description of their dimension $\geq d'-1$ intersection.                                                                       \\
\hline
\textsc{ConstraintToAffine} & Given $(I, c)$ where $I \subseteq [n]$ and $c \in \{0,1\}$ \\ & Returns the description of the affine space  $\{x_i \in \F_2^n \mid \bigoplus_{i \in I} x_i = c\}$              \\
\hline
\textsc{FindSubstitution}   & Given $(\cC$, b), a circuit with only $\oplus$-type gates of fanout 1 \\ & Returns the substitution $x_j \gets \cS$ that makes $\cC$ output $b$ \\
\hline
\end{tabular}
\end{table}

It is clear that if the algorithm terminates then the output is a subset of $S$, as in each call we only take the \emph{intersection} of affine spaces, which are subsets of our starting space. We argue that each iteration, if it does not return in branches 2 or 5, reduces $\mu(\cC)$ by at least 4 or makes the circuit constant. This follows as the algorithm mirrors the structure of the proof of theorem \ref{thm:affine-lb}; subsequent branches correspond to the cases over $\alpha$.

\begin{description}
    \item[Branch at step 13] corresponds to Case 2. The substitution from the proof removes at least three gates and reduces $\eta(\cC)$ by at least $1$.
    \item[Branch at step 21] corresponds to Case $3$. The substitution from the proof either reduces $\mu$ by four or makes the circuit constant.
    \item[Branch at step 30] corresponds to Case $4$. The substitution from the proof either reduces $\mu$ by four or makes the circuit constant.
\end{description}

If, $\cC$ becomes constant as in Cases 3.1 or 4.1, the next recursive call enters branch 2 and returns the intersection on which $\cC$ is constant described in the proof. Otherwise, $\mu(\cC)$ is reduced by $4$. Notice that we can make at most $(d' - d)$ iterations before $\cC$ must become constant. Hence the algorithm terminates and returns an affine space on which $\cC$ is constant. During each iteration, $\textsc{ConstraintToAffine}$, reduces the dimension by at most $1$, hence the returned space has dimension at least $d$.

Finally, as each subroutine runs in polynomial time and we make at most $d' - d$ iterations, \textsc{FindConstantSubspace} runs in polynomial time.
\end{proof}


\clearpage



\bibliography{references}

@article{Redkin1973,
    title={Proof of minimality of circuits consisting of functional elements},
    author={Red’kin, NP},
    journal={Systems Theory Research: Problemy Kibernetiki},
    pages={85--103},
    year={1973},
    publisher={Springer}
}

@article{Schnorr74,
    author    = {Claus{-}Peter Schnorr},
    title     = {Zwei lineare untere Schranken f{\"{u}}r die Komplexit{\"{a}}t
               Boolescher Funktionen},
    journal   = {Computing},
    volume    = {13},
    number    = {2},
    pages     = {155--171},
    year      = {1974},
    url       = {https://doi.org/10.1007/BF02246615},
    doi       = {10.1007/BF02246615},
    bibsource = {dblp computer science bibliography, https://dblp.org}
}

@inproceedings{Paul1975,
  title={A 2.5 n-lower bound on the combinational complexity of Boolean functions},
  author={Paul, Wolfgang J},
  booktitle={Proceedings of the seventh annual ACM symposium on Theory of computing},
  pages={27--36},
  year={1975}
}

@article{KleinP1980,
  title={Asymptotically optimal circuit for a storage access function},
  author={Klein and Paterson},
  journal={IEEE Transactions on Computers},
  volume={100},
  number={8},
  pages={737--738},
  year={1980},
  publisher={IEEE}
}

@article{BlumK95,
  author     = {Blum, Manuel and Kannan, Sampath},
  title      = {Designing programs that check their work},
  year       = {1995},
  issue_date = {Jan. 1995},
  publisher  = {Association for Computing Machinery},
  address    = {New York, NY, USA},
  volume     = {42},
  number     = {1},
  issn       = {0004-5411},
  url        = {https://doi.org/10.1145/200836.200880},
  doi        = {10.1145/200836.200880},
  journal    = {Journal of the ACM (JACM)},
  month      = jan,
  pages      = {269–291},
  numpages   = {23}
}

@inproceedings{DemenkovK11,
    author    = {Evgeny Demenkov and
               Alexander S. Kulikov},
    editor    = {Filip Murlak and
               Piotr Sankowski},
    title     = {An Elementary Proof of a 3n - o(n) Lower Bound on the Circuit Complexity
               of Affine Dispersers},
    booktitle = {Mathematical Foundations of Computer Science 2011 - 36th International
               Symposium, {MFCS} 2011, Warsaw, Poland, August 22-26, 2011. Proceedings},
    series    = {Lecture Notes in Computer Science},
    volume    = {6907},
    pages     = {256--265},
    publisher = {Springer},
    year      = {2011},
    url       = {https://doi.org/10.1007/978-3-642-22993-0\_25},
    doi       = {10.1007/978-3-642-22993-0\_25},
    bibsource = {dblp computer science bibliography, https://dblp.org}
}

@INPROCEEDINGS{FindGHK2016,
    author={Find, Magnus Gausdal and Golovnev, Alexander and Hirsch, Edward A. and Kulikov, Alexander S.},
    booktitle={2016 IEEE 57th Annual Symposium on Foundations of Computer Science (FOCS)}, 
    title={A Better-Than-3n Lower Bound for the Circuit Complexity of an Explicit Function}, 
    year={2016},
    volume={},
    number={},
    pages={89-98},
    doi={10.1109/FOCS.2016.19}
}

@article{LozhkinK2021,
  title={The Complexity of the Standard Multiplexer Function in a Class of Switching Circuits},
  author={Lozhkin, SA and Khzmalyan, DE},
  journal={Computational Mathematics and Modeling},
  volume={32},
  number={4},
  pages={478--489},
  year={2021},
  publisher={Springer}
}

@inproceedings{Li022,
    author    = {Jiatu Li and
               Tianqi Yang},
    editor    = {Stefano Leonardi and
               Anupam Gupta},
    title     = {3.1\emph{n} - \emph{o}(\emph{n}) circuit lower bounds for explicit
               functions},
    booktitle = {{STOC} '22: 54th Annual {ACM} {SIGACT} Symposium on Theory of Computing,
               Rome, Italy, June 20 - 24, 2022},
    pages     = {1180--1193},
    publisher = {{ACM}},
    year      = {2022},
    url       = {https://doi.org/10.1145/3519935.3519976},
    doi       = {10.1145/3519935.3519976},
    bibsource = {dblp computer science bibliography, https://dblp.org}
}

@article{ChenJSW24,
  author    = {Lijie Chen and
               Ce Jin and
               Rahul Santhanam and
               Ryan Williams},
  title     = {Constructive Separations and Their Consequences},
  journal   = {TheoretiCS},
  volume    = {3},
  year      = {2024},
  url       = {https://doi.org/10.46298/theoretics.24.3},
  doi       = {10.46298/THEORETICS.24.3},
  bibsource = {dblp computer science bibliography, https://dblp.org}
}

@inproceedings{GrosserC25,
  author       = {Stefan Grosser and
                  Marco Carmosino},
  editor       = {Michal Kouck{\'{y}} and
                  Nikhil Bansal},
  title        = {Student-Teacher Constructive Separations and (Un)Provability in Bounded
                  Arithmetic: Witnessing the Gap},
  booktitle    = {Proceedings of the 57th Annual {ACM} Symposium on Theory of Computing,
                  {STOC} 2025, Prague, Czechia, June 23-27, 2025},
  pages        = {1341--1347},
  publisher    = {{ACM}},
  year         = {2025},
  url          = {https://doi.org/10.1145/3717823.3718216},
  doi          = {10.1145/3717823.3718216},
  bibsource    = {dblp computer science bibliography, https://dblp.org}
}

@inproceedings{CarmosinoDJ2025,
  author    = {Carmosino, Marco and Dang, Ngu and Jackman, Tim},
  title     = {{Simple Circuit Extensions for XOR in PTIME}},
  booktitle = {43rd International Symposium on Theoretical Aspects of Computer Science (STACS 2026)},
  pages     = {23:1--23:20},
  series    = {Leibniz International Proceedings in Informatics (LIPIcs)},
  isbn      = {978-3-95977-412-3},
  issn      = {1868-8969},
  year      = {2026},
  volume    = {364},
  editor    = {Mahajan, Meena and Manea, Florin and McIver, Annabelle and Thang, Nguyen Kim},
  publisher = {Schloss Dagstuhl -- Leibniz-Zentrum f{\"u}r Informatik},
  address   = {Dagstuhl, Germany},
  url       = {https://drops.dagstuhl.de/entities/document/10.4230/LIPIcs.STACS.2026.23},
  urn       = {urn:nbn:de:0030-drops-255127},
  doi       = {10.4230/LIPIcs.STACS.2026.23}
}

@article{DBLP:journals/tcs/Blum84,
  author       = {Norbert Blum},
  title        = {A Boolean Function Requiring 3n Network Size},
  journal      = {Theor. Comput. Sci.},
  volume       = {28},
  pages        = {337--345},
  year         = {1984},
  url          = {https://doi.org/10.1016/0304-3975(83)90029-4},
  doi          = {10.1016/0304-3975(83)90029-4},
  bibsource    = {dblp computer science bibliography, https://dblp.org}
}

@article{DBLP:journals/siamcomp/Ilango24,
  author       = {Rahul Ilango},
  title        = {Constant Depth Formula and Partial Function Versions of {MCSP} Are
                  Hard},
  journal      = {{SIAM} J. Comput.},
  volume       = {53},
  number       = {6},
  pages        = {S20--317},
  year         = {2024},
  url          = {https://doi.org/10.1137/20m1383562},
  doi          = {10.1137/20M1383562},
  bibsource    = {dblp computer science bibliography, https://dblp.org}
}

@article{DBLP:journals/jcss/GolovnevHKK18,
  author       = {Alexander Golovnev and
                  Edward A. Hirsch and
                  Alexander Knop and
                  Alexander S. Kulikov},
  title        = {On the limits of gate elimination},
  journal      = {J. Comput. Syst. Sci.},
  volume       = {96},
  pages        = {107--119},
  year         = {2018},
  url          = {https://doi.org/10.1016/j.jcss.2018.04.005},
  doi          = {10.1016/J.JCSS.2018.04.005},
  bibsource    = {dblp computer science bibliography, https://dblp.org}
}

@article{DBLP:journals/apal/MullerP20,
  author       = {Moritz M{\"{u}}ller and
                  J{\'{a}}n Pich},
  title        = {Feasibly constructive proofs of succinct weak circuit lower bounds},
  journal      = {Ann. Pure Appl. Log.},
  volume       = {171},
  number       = {2},
  year         = {2020},
  url          = {https://doi.org/10.1016/j.apal.2019.102735},
  doi          = {10.1016/J.APAL.2019.102735},
  bibsource    = {dblp computer science bibliography, https://dblp.org}
}

@book{Jukna,
  author       = {Stasys Jukna},
  title        = {Boolean Function Complexity - Advances and Frontiers},
  series       = {Algorithms and combinatorics},
  volume       = {27},
  publisher    = {Springer},
  year         = {2012},
  url          = {https://doi.org/10.1007/978-3-642-24508-4},
  doi          = {10.1007/978-3-642-24508-4},
  isbn         = {978-3-642-24507-7},
  bibsource    = {dblp computer science bibliography, https://dblp.org}
}

\clearpage
\appendix
\clearpage
\section{An Optimal $\XOR$ Checker}\label{sec:xor-checker}

Refuters find errors for computational models which fall \emph{below} a threshold established from a lower bound. It is natural to ask whether one can still find errors the model \emph{meets} the threshold but still fails to compute $\func{f}$. We call such an algorithm a \emph{checker}. We show that checkers exist for $\XOR$.
It will be useful to keep in mind the following facts about $\XOR$.
\subparagraph*{Properties of $(\neg)\XOR$.} $(\neg)\XOR_n$ is fully downward self-reducible: fixing any subset of inputs leaves $(\neg)\XOR$ on the remaining variables. Second, $(\lnot)\XOR_n$ strongly depends on every input: flipping any single bit always flips the output. Taken together, every partial restriction of $(\lnot)\XOR_n$ remains non-degenerate.

\begin{theorem}
  \label{thm:refined-xor-refuter}
    Let $C$ be a DeMorgan circuit on $n$ inputs with size $s = 3(n - 1)$ that \textbf{does not} compute $(\neg)\XOR_n$. Then, an input $x \in \{0, 1\}^n$ such that $C(x) \neq (\neg)\XOR_n(x)$ can be found in polynomial time with respect to $n$.
\end{theorem}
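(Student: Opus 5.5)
We turn Schnorr's gate elimination argument into a recursive algorithm on $n$, keeping a partial restriction $\rho$ of the variables fixed so far. We also maintain the invariant that the current normalized circuit $\cC$ has at most $3(m-1)$ binary gates on its $m$ live variables and does not compute $(\neg)\XOR_m$; the target sign is fixed by $\rho$. In the base case $m \le 2$ we brute force. In the inductive step we first run the same checks the refuter of Section \ref{sec:xor-refuter} would run. If some live variable $x_i$ is not read, we evaluate $\cC$ on two inputs differing only in $x_i$, and one of them is an error, since $\XOR$ strongly depends on every input. Otherwise we pick a topologically minimal binary gate $\alpha$ that reads two variables $x_i,x_j$ and check the structural facts Schnorr's argument needs: $(\neg)\alpha$ is not the output, and some variable of $\alpha$ has fanout at least $2$ or $\alpha$ feeds a binary gate. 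If one of these fails, a one-bit flip argument again yields a counterexample after evaluating $\cC$ on a constant number of inputs. If all checks pass, there is a variable $x_i$ and a constant $b$ such that $x_i \gets b$ eliminates at least $3$ gates.

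The new difficulty is that $\cC\hook_{x_i\gets b}$ may have at most $3(m-2)$ gates and yet correctly compute $(\neg)\XOR_{m-1}$. Then the error lies in the half $x_i = \neg b$, and recursing would lose it. To handle this we use the characterization of \cite{CarmosinoDJ2025} together with its converse. Every normalized DeMorgan circuit of size exactly $3(m-2)$ that computes $(\neg)\XOR_{m-1}$ is a binary tree of $\XOR_2$ and $\neg\XOR_2$ widgets, and conversely every such tree computes $(\neg)\XOR$. Recognizing such a tree, and reading off its sign, is a polynomial-time syntactic check. So the algorithm proceeds as follows. If the restricted circuit has strictly fewer than $3(m-2)$ gates, it calls the refuter for $\XOR$ from Section \ref{sec:xor-refuter}; that circuit cannot compute $(\neg)\XOR_{m-1}$, so the refuter finds an error of the restriction, which is also an error of $\cC$. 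If it has exactly $3(m-2)$ gates and is not a widget tree (or is one with the wrong sign), we recurse with the checker. Otherwise it backtracks and applies the opposite substitution $x_i \gets \neg b$.

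The main obstacle is this backtracking step. Since $\cC$ is wrong but its $b$-half is right, $\cC\hook_{x_i\gets\neg b}$ must be incorrect, so we only need it to have at most $3(m-2)$ gates, or else to exhibit a violation we can turn into a counterexample. To do this we exploit the widget-tree structure of the $b$-half. Working backwards, all gates of $\cC$ that survive the substitution $x_i\gets b$ form a widget tree, and $x_i$ attaches to it through the three or more eliminated gates. We would do a case analysis on how $x_i$, $\alpha$, and the gate $\beta$ fed by $\alpha$ connect. In a DeMorgan circuit, setting $x_i$ to either constant passes or fixes $\alpha$, so we would argue that whenever the $\neg b$ side kills fewer than $3$ gates, the local structure forces $\cC$ to be a widget tree on $m$ variables, or else to depend on $x_i$ in a way a single bit flip exposes. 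In the first case $\cC$ would compute $(\neg)\XOR_m$ (with the sign checkable), contradicting the hypothesis. If the general argument is too messy, we would first try choosing $\alpha$ and $x_i$ more cleverly, for example preferring a variable of fanout at least $2$ so that both constants eliminate the gates it feeds. Failing that, when the $\neg b$ side eliminates only $2$ gates, we would try $m$ additional candidate gates, all tested by the polynomial-time widget-tree recognizer. Each level does polynomial work with at most one backtrack, and there are $n$ levels, so the total running time is polynomial.
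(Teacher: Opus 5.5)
Your outer loop matches the paper's checker. You run Schnorr's refuter checks, substitute $x_p$ to fix $\beta$, and call the refuter if more than three gates vanish. You use the widget-tree recognizer to decide whether to recurse with the checker, and otherwise flip the bit. The gap is the step you yourself flag: you never show that the flipped substitution leaves at most $3(m-2)$ gates. The route you sketch does not work, namely that a two-gate loss forces $\cC$ to be a widget tree or exposes an error by one bit flip. Take $\alpha = x_p \wedge x_q$ feeding only $\beta = x_p \wedge \neg\alpha$, with $x_p$ of fanout two. For $m=3$, let the output be $\gamma = \beta \vee W$, where $W$ is a three-gate $\XOR_2$ widget on $x_q,x_r$. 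Every refuter check passes. Setting $x_p \gets 0$ kills $\alpha,\beta,\gamma$ and leaves the correct widget $W$. Setting $x_p \gets 1$ only passes $\alpha$ to $x_q$ and $\beta$ to $\neg x_q$, so $3(m-2)+1$ gates remain. Yet $\cC$ is wrong (e.g.\ on $(1,1,0)$) and is not a widget tree. Nothing in your plan yields a counterexample here, and "trying $m$ other candidate gates" comes with no argument.

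The missing idea is that in this situation $\cC$ is \emph{locally non-optimal}, so you should shrink it rather than seek a structural contradiction. First, if the flipped substitution removes only two gates, then the original value $b$ must fix both $\alpha$ and $\beta$. If $\neg b$ fixed $\alpha$, a third gate would die: a reader of $\alpha$, or a reader of $\beta$ once $\beta$ collapses to a constant. But then the original substitution removes $\alpha$, $\beta$ and all readers of both. That is at least four gates, or a constant circuit, so the refuter branch already fired, unless $\alpha$ feeds only $\beta$. In that remaining configuration, the subcircuit at $\beta$ computes a function of $x_p,x_q$ with two DeMorgan gates. Such a function cannot be $(\neg)\XOR_2$, so it is computable with at most one binary gate. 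Replace it and delete $\alpha$. This gives an equivalent circuit of size below $3(m-1)$, on which the ordinary refuter finds an error. In every other case the flipped side loses at least three gates and must be incorrect, so recursion is valid. This is exactly the paper's corner-case branch.
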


While refuters follow straightforwardly from proofs via gate elimination, Theorem \ref{thm:refined-xor-refuter} does not. It leverages a strong structural characterization of the circuits which compute $\XOR$. Section \ref{sec:xor-refuter} establishes a refuter subroutine for $\XOR$ which serves as the basis for the checker. In Section \ref{sec:xor-detector}, we show that we can \emph{detect} optimal $(\neg)\XOR$ circuits due to to their rigid structure. Finally, in Section \ref{sec:xor-checker}, we leverage the refuter and the detectability of $\XOR$ to obtain our checker.

\subsection{A Refuter for $\XOR$}\label{sec:xor-refuter}

We begin with an $\XOR$ refuter (Algorithm \ref{alg:xor-refuter}), which our checker can both use as a subroutine, and whose procedure captures many ways in which an appropriately sized circuit can still ``trivially'' fail to compute $\XOR$.

\begin{theorem}
  \label{thm:xor-refuter}
    Let $C$ be a DeMorgan circuit on $n \geq 2$ inputs with size $s < 3(n - 1)$. We can find an input $x \in \{0, 1\}^n$ such that $C(x) \neq \XOR_n(x)$ in polynomial time with respect to $n$.
\end{theorem}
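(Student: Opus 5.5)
We turn Schnorr's DeMorgan gate elimination argument for $\XOR_n$ into a recursive algorithm, in the same way Algorithm \ref{algo:mux-refuter-alt} mirrors Theorem \ref{thm:improved-MUX-lb}. The algorithm maintains a partial assignment $\rho$ to already-fixed variables and a normalized circuit $\cC\hook_\rho$ on the live variables. The target function on the live variables is $(\neg)\XOR$, with the sign determined by the parity of $\rho$; since $\XOR$ is fully downward self-reducible, this target stays a genuine $(\neg)\XOR$ on the remaining variables. The invariant is that $\cC\hook_\rho$ has $m$ live variables and fewer than $3(m-1)$ gates.

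\textbf{Base case.} At $m = 2$ the circuit has fewer than $3$ gates. We evaluate it on all four assignments to the live variables and return a disagreement, extended by $\rho$. One exists because $\XOR_2$ and $\neg\XOR_2$ require $3$ DeMorgan gates. More generally, whenever $m$ is constant we brute force.

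\textbf{Inductive step.} We carry out Schnorr's case analysis and return a counterexample whenever a structural assertion fails.
\begin{description}
  \item[Case 1: degeneracy.] If some live $x_i$ is not read, we evaluate on an arbitrary live assignment and on the same assignment with $x_i$ flipped. The circuit outputs the same value on both, while $\XOR$ differs, so one of them is an error.
  \item[Case 2: short circuit.] Let $\alpha$ be a topologically minimal gate. It reads two variables $x_i, x_j$, because normalization rules out $x_i \land x_i$. If $(\neg)\alpha$ is the output, then fixing $x_i$ to the value that fixes $\alpha$ makes the circuit constant while $(\neg)\XOR$ on $m-1 \geq 1$ variables is not. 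Evaluating with one remaining variable flipped gives an error.
  \item[Case 3: a variable of fanout 1.] Suppose every input to $\alpha$ has fanout $1$, say $x_i$ does, and let $\beta$ be a gate reading $\alpha$. If $\beta$ reads $\alpha$ together with a subfunction not depending on $x_i,x_j$, then a substitution fixing $\alpha$ or $\beta$ can make the function independent of the other variable; this is Schnorr's contradiction. The algorithm detects it by checking, after the substitution, whether $x_j$ is still read, and if not it falls back to the Case 1 test.
  \item[Case 4: eliminating three gates.] Otherwise we pick a variable $x_i$ with fanout at least $2$ feeding $\alpha$, or use Schnorr's standard argument on the successors of $\alpha$. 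We then choose the constant that fixes $\alpha$, which kills $\alpha$, its successor, and the second gate reading $x_i$: three gates.
\end{description}

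After the successful substitution we update $\rho$, renormalize, and recurse on $m-1$ variables with fewer than $3(m-2)$ gates. Correctness follows by induction: an error for the restricted circuit, extended by $\rho$, is an error for $\cC$, because both $\cC$ and $\XOR$ are evaluated consistently on the fixed variables and the target sign was tracked. The running time is polynomial, since there are at most $n$ levels, each doing only graph inspection, substitution, simplification, and $O(1)$ evaluations.

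\textbf{Main obstacle.} The hard part is making the "at least three gates eliminated" step fully algorithmic. Schnorr's argument has a subtle case where the two gates read by $x_i$ coincide with successors of $\alpha$, or where eliminated gates overlap, so the count is only two. The proof handles this by arguing that the function would then not strongly depend on some variable. I would handle it uniformly by measuring the size drop after each substitution. If the drop is less than $3$, I would either try the other endpoint $x_j$ of $\alpha$, or extract from the overlap a pair of inputs differing in one live bit on which the circuit agrees; the latter is found by evaluating the four assignments to $x_i,x_j$ with the rest fixed. The other point to verify is that fixing the variable never makes the circuit constant prematurely, which Case 2 covers.
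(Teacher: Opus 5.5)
\textbf{Verdict: there is a genuine gap.} Your overall plan matches the paper's: run Schnorr's case analysis as a recursive algorithm, return a counterexample whenever an assertion fails, and otherwise recurse with fewer than $3(m-1)$ gates. The gap is in the one quantitative step the recursion depends on. You never show that, once Cases 1--3 do not fire, the substitution removes at least three gates.

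Your Case 4 asserts that fixing $x_i$ to fix $\alpha$ kills $\alpha$, a successor of $\alpha$, and the second gate $\beta$ reading $x_i$. Your ``main obstacle'' paragraph then concedes that the successor of $\alpha$ may be $\beta$ itself. The fallback you offer for this case is unjustified:
\begin{itemize}
\item Nothing guarantees that substituting for the other endpoint $x_j$ does better; $x_j$ may have fanout $1$.
\item Nothing guarantees that the four assignments to $x_i,x_j$, with the rest fixed, contain a one-bit-apart pair on which the circuit agrees. The circuit may compute parity correctly on that slice.
\end{itemize}
As written, the size invariant can break and the recursion is unsupported. The overlap is also not resolved in Schnorr's argument by a failure of strong dependence, as you suggest.

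The missing observation is short. In the overlap case, $\beta$ reads $(\neg)x_i$ and $(\neg)\alpha$, and both become constant under your substitution, so $\beta$ becomes constant. If $(\neg)\beta$ is the output, the restricted circuit is constant, and you refute it by flipping another live variable, as in your Case 2. Otherwise some successor $\delta$ of $\beta$ is eliminated. Since $\alpha$ reads only inputs, $\delta\neq\alpha$, so $\alpha,\beta,\delta$ are three distinct eliminated gates.

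The paper avoids the overlap altogether by choosing the constant differently. It checks that $(\neg)\beta$ is not the output and then sets $x_p$ to fix $\beta$ rather than $\alpha$. This eliminates $\beta$, a successor $\gamma$ of $\beta$, and $\alpha$ (which now has a constant input). These three are automatically distinct, because $\gamma$ reads $\beta$ and $\alpha$ reads only variables.

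With either repair, measuring the size drop is unnecessary: it is always at least $3$ unless the circuit collapses to a constant, which you already know how to refute. Your Case 3 is muddled: the detour through a successor of $\alpha$ is irrelevant, and ``every input has fanout $1$'' should really be ``some input''. It is harmless, though, since fixing the other input of $\alpha$ disconnects a fanout-$1$ variable and your ``check whether it is still read, then flip it'' test catches this.
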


\begin{proof}

Let $\cC$ be a DeMorgan circuit on $n$ inputs with size $s < 3(n - 1)$ and let $f$ be the function it computes. We will show Algorithm \ref{alg:xor-refuter}, when run on $\cC$, $I = [n], b = 0 (1 \text{ for } (\neg)\XOR), \vec{a} = \vec{0}$ outputs an input on which $\cC$ errs in polynomial time. Termination and correctness of the algorithm follow from the fact that it follows the structure of Schnorr's $3(n-1)$ lower bound, modulo the proof that $\cC$ actually errs on one of the inputs at each \textbf{compute} step. We recall Schnorr's argument, structuring it so that the underlying algorithm is clear.

\begin{proof}[$(\neg)\XOR$ Lower Bound from \cite{Schnorr74}] As a proof via gate elimination, the argument proceeds via induction. The base case is trivial as $(\neg)\XOR_1(x) \equiv (\neg)x$ and $3(1 - 1) = 0$. Let $C$ be a normalized circuit with $n \geq 2$ inputs. 
  \begin{description}
  \item[Case 1: Obvious Degeneracy.] Suppose $C$ does not read some $x_i$.  Then $C$ cannot depend on $x_i$, and therefore does not compute $\XOR_n$, which depends on all $n$ variables.

  \item[Assertion 1.] $C$ reads every input variable at least once.

  \item[Work: Capture the First Gate.] Let $\alpha$ be the topologically minimal binary gate in $C$.  Because $C$ is in normal form, $\alpha$ reads $(\neg)x_i$ and $(\neg)x_j$ for some \emph{distinct} $i,j \in [n]$.

  \item[Case 2: Insufficient Fanout.] Without loss of generality, fix $i$ and suppose $\alpha$ is the \emph{only} costly gate reading from $x_i$, so the fanout of $x_i$ is 1.  If we substitute $x_j$ to fix $\alpha$ and simplify, $x_i$ becomes disconnected from $C$.  The resulting circuit cannot compute $(\neg)\XOR_{n-1}$, which still depends on $x_i$. Therefore $C$ cannot compute $\XOR_n$.

  \item[Assertion 2.] The fanout of $x_i$ is strictly greater than 1.

  \item[Case 3: Short-Circuit at $\alpha$.] Suppose $\alpha$ is the output gate.  Without loss of generality, fix $i$ and substitue $x_j$ to eliminate $\alpha$ via a fixing rule. This trivializes the entire circuit $C$ by simplifying it to a constant. Therefore, $C$ cannot compute $\XOR_n$ as a single bit restriction of $\XOR_n$ is $(\neg)\XOR_{n-1}$.  This is the same contradiction obtained above.

  \item[Assertion 3.] Gate $\alpha$ is not the output.

  \item[Work: Capture a Second Gate.] Let $\beta$ be a topologically minimal binary gate distinct from $\alpha$ that also reads $x_i$ in $C$.  $\beta$ must exist because the fanout of $x_i$ is strictly greater than 1.

  \item[Case 4: Short-Circuit at $\beta$.] Suppose $\beta$ is the output gate and argue exactly as in Case 3 using $x_i$ to fix $\beta$.

  \item[Assertion 4.] Gate $\beta$ is not the output.

  \item[Work: Capture a Third Gate.]  Let $\gamma$ be the topologically minimal binary gate that reads $\beta$. Fix $x_i$ to fix $\beta$ and simplify. This eliminates $\alpha, \beta$ and $\gamma$, reducing the circuit size by $3$.

  \item[Invoke Inductive Hypothesis.] $\cC$ now computes $(\neg)\XOR_{n-1}$ and its size has reduced by at least three. Applying our inductive hypothesis, the original circuit had size at least $3 + \sigma((\neg)\XOR_{n-1}) = 3 + 3((n-1)-1) = 3n - 3 = 3(n - 1)$. \qedhere
  \end{description}
\end{proof}

We now argue that the algorithm outputs an error. Each recursive call of the algorithm corresponds to the inductive step for $|I| = k \leq n$. In the base case, when $k = 2$ and $\sigma(\cC) < 3$, the algorithm in steps 2 - 6 brute forces over all four possible assignments and returns one on which it errs by construction. We proceed via induction; fix $k > 2$ to be the size of $I$ in our input and presume \textsc{XOR-Refuter} correctly outputs on all inputs where $|I| = k - 1$. The algorithm does not enter the branch at step $2$ as $|I| = k > 2$. Subsequent branches correspond to the Cases identified in Schnorr's proof. If we do not enter these branches then the corresponding Assertions of Schnorr's proof hold for $\cC$. Hence any gates and variables bound in the interleaving steps exist. 

\begin{description}
    \item[Branch at step 7] corresponds to Case 1 of Schnorr's proof. $\cC$ does not depend on $x_i$ but $\XOR_I(\cdot) \oplus c$ does. In step 9, $\cC$ evaluates the same on both inputs but $\XOR_I(\cdot) \oplus c$ does not. The algorithm returns whichever is incorrect.

 \item[Branch at step 13] corresponds to Case 2 and 3. Setting $x_q$ to fix $\alpha$ in $\cC$ either disconnecting $x_p$ or makes the circuit constant. $\cC$ evaluates the same on both inputs unlike $\XOR_I(\cdot) \oplus c$. 

 \item[Branch at step 19] corresponds to Case 4. Setting $x_p$ as described leaves $\cC$ to be a constant, but $\XOR_{I}(\cdot) \oplus c$ does not become constant. In this and the previous branches, the only indices of $\vec{a}$ that are changed are in $I$, and hence the condition that the error agrees with the original $\vec{a}_i$ on all non-$I$ inputs is met.
\end{description}

Otherwise, the algorithm in steps 25 - 28 follows the final Work and Inductive Hypothesis step of the proof. As Assertions 1-4 hold for $\cC$, setting $x_p$ to fix $\beta$ eliminates at least three gates, and the resulting circuit is now too small to compute the appropriate parity of the $k-1$ bits indexed by ${I \setminus \{p\}}$. Therefore the recursive call returns an input on which the simplified circuit errs that is consistent with $\vec{a}_p$ (and all $\vec{a}_i$ where $i \not\in I)$. Hence $\cC$ also errs on this input as desired and this error is consistent with $\vec{a}$ on all non-$I$ indices. Algorithm \ref{alg:xor-refuter} is therefore correct. 

Each iteration can clearly be computed in polynomial time, as $\sigma(\cC) < 3(n-1)$ and the normalization of $\cC$ ensure that evaluating, substituting, and simplifying can all be done in polynomial time.
\end{proof}

\begin{algorithm}[p]
  \caption{Refuter for circuits purportedly computing $(\neg)\XOR_n$ with fewer than $3(n-1)$ gates}
  \label{alg:xor-refuter}
  \begin{algorithmic}[1]
    \Require $\cC$ is a normalized circuit, $I \subseteq [n]$, $c \in \{0,1\}$, $\vec{a} \in \F_2^n$, such that $\sigma(\cC) < 3(|I|-1)$ and $\cC$ only reads variables in $I$
    \Ensure $\vec{w} \in \F_2^n$ such that $\cC(\vec{w}) \neq \XOR_{I}(\vec{w}) \oplus c$ and $\vec{w}_i = \vec{a}_i$ for all $i \not\in I$
    \Procedure{XOR-Refuter}{$\cC, I, c, \vec{a}$}
    \If{$|I| = 2$}
    \LComment{Base Case: only 4 assignments possible and $\cC$ must err on one}
        \State $i, j \leftarrow$ elements of $I$
        \State \textbf{compute} $\cC$ on $\vec{a}, \vec{a} + \vec{e}_i, \vec{a} + \vec{e_j}, \vec{a} + \vec{e}_i + \vec{e}_j$
        \State \Return an input on which $\cC \neq \XOR_I$
    \EndIf
    \If{$\cC$ does not read $x_i$ for some $i \in I$}
    \LComment{Case 1: $\cC$ is degenerate with respect to $x_i$ but $\XOR_I$ is not}
        \State \textbf{compute} $\cC$ on $\vec{a}$ and $\vec{a} + \vec{e}_i$
        \State \Return whichever input $\cC$ does not agree with $\XOR_{I}(\cdot) \oplus c$ on
    \EndIf
    \State $\alpha \leftarrow$ topological minimal binary gate in $\cC$
    \State $p, q \leftarrow $ indices of $I$ such that $\alpha$ reads $(\neg)x_p$ and $(\neg)x_q$
    \If{\Call{Fanout}{$x_p$} = 1 or $(\neg)\alpha$ is the output}
        \State $\vec{a}_q \leftarrow $ constant whose substitution for $x_q$ fixes $\alpha$
        \LComment{Case 2 and 3: $\cC\hook_{x_q \gets \vec{a}_q}$ no longer depends on $x_p$ but $\XOR_{I \setminus \{q\}}$ does}
        \State \textbf{compute} $\cC$ on $\vec{a}$ and $\vec{a} + \vec{e}_p$
        \State \Return whichever input $\cC$ does not agree with $\XOR_{I}(\cdot) \oplus c$ on
    \EndIf
    \State $\beta \gets$ topological minimal binary gate reading $(\neg)x_p$ distinct from $\alpha$
    \If{$(\neg)\beta$ is the output gate of $\cC$}
        \State $\vec{a}_p \gets$ constant whose substitution for $x_p$ fixes $\beta$
        \LComment{Case 4: $\cC\hook_{x_p \gets \vec{a}_p}$ is constant but $\XOR_{I \setminus \{p\}}$ is not}
        \State \textbf{compute} $\cC$ on $\vec{a}$ and $\vec{a} + \vec{e}_p$
        \State \Return whichever input $\cC$ does not agree with $\XOR_{I}(\cdot) \oplus c$ on
    \EndIf
    \LComment{We can eliminate at least three gates and recurse}
    \State $\gamma \gets $ topologically minimal binary gate reading $(\neg)\beta$
    \State $\vec{a}_p \gets$ constant whose substitution for $x_p$ fixes $\beta$
    \State $\cC \gets \cC$ with $\vec{x}_p$ substituted for $x_p$ and then simplified
    \State \Return \Call{XOR-Refuter}{$\cC, I \setminus \{p\}, b \oplus \vec{a}_p, \vec{a}$}
    \EndProcedure
  \end{algorithmic}
\end{algorithm}

\clearpage
\subsection{Optimal $\XOR$ Circuits Are Detectable}
\label{sec:xor-detector}
Algorithm \ref{alg:xor-checker}, the $\XOR$ checker, initially proceeds in the same way as the refuter. However, when it finds a substitution that eliminates at least three gates, it is possible that the restricted circuit \emph{does} compute the appropriate $\XOR$ of the remaining inputs, and $\cC$ when the input is restricted with the opposite bit. To circumvent  this, the checker will need to be able to detect whether the resulting circuit is now correct. We show this can be done in polynomial time.

\begin{lemma}
\label{lem:xor-detector}
    Let $\cC$ be a normalized DeMorgan circuit on $n$ variables of size $3(n-1)$. Determining whether $\cC$ computes $(1) \XOR_n$, $(2) \neg\XOR_n$, or $(3)$ neither can be computed in polynomial time with respect to $n$.
\end{lemma}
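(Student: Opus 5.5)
We use the characterization from \cite{CarmosinoDJ2025}: a DeMorgan circuit of size exactly $3(n-1)$ computes $(\neg)\XOR_n$ if and only if it is a binary tree of $\XOR_2$ and $\neg\XOR_2$ widgets, each widget a three-gate gadget; the converse is immediate. The detector therefore recognizes this structure syntactically and, when it is found, evaluates the circuit once to decide between $\XOR_n$ and $\neg\XOR_n$.

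The detector is recursive and mirrors the refuter (Algorithm~\ref{alg:xor-refuter}). First it checks that every variable is read; if not, we report ``neither''. Next it takes the topologically minimal binary gate $\alpha$, which reads $(\neg)x_p$ and $(\neg)x_q$. It then tries to match an $\XOR_2$ widget on $x_p, x_q$: exactly three binary gates $\alpha,\beta,\gamma$ with $\alpha,\beta$ reading only (negated) copies of $x_p,x_q$, and $\gamma$ reading only $(\neg)\alpha$ and $(\neg)\beta$. The widget must also be isolated: $x_p$ and $x_q$ have no other readers, and $\alpha,\beta$ have no readers outside $\gamma$. All of this is a constant-size local check against the finitely many DeMorgan realizations of $(\neg)\XOR_2$ with three gates, with negations allowed on wires. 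If the match succeeds, we compute the function of the three-gate gadget by evaluating it on four inputs and confirm that it is $\XOR_2$ or $\neg\XOR_2$. We then contract the widget: replace $\gamma$ by a fresh variable $y$, negated if the gadget computes $\neg\XOR_2$, and delete $\alpha,\beta$. This gives a normalized circuit on $n-1$ variables of size $3(n-2)$, and we recurse. The base case $n=1$ has size $0$, so the circuit is $(\neg)x_1$ and can be read off directly.

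Correctness in the ``yes'' direction comes from the characterization. If $\cC$ computes $(\neg)\XOR_n$ at size $3(n-1)$, then it is a widget tree, so some widget has two leaf inputs. The leaves of its topologically minimal binary gate must be such a widget's inputs, since in a tree of widgets the minimal gate lies in a bottom widget whose inputs are variables. Contracting it yields a widget tree on $n-1$ inputs that computes $(\neg)\XOR_{n-1}$ in $y$ and the other variables. In the ``no'' direction, success of every step exhibits $\cC$ as a widget tree, and the converse then shows that $\cC$ computes $(\neg)\XOR_n$. So a failed match certifies ``neither''. Once recursion succeeds, evaluating $\cC$ at $\vec 0$ distinguishes case (1) from case (2). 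Each round is polynomial and there are $n-1$ rounds.

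The main obstacle is making the widget match robust, so that we never wrongly reject a correct circuit. Two points need care. The first is that the minimal gate $\alpha$ really does sit inside a bottom widget, rather than some other gate being the one we need to start from; this is where the precise form of the \cite{CarmosinoDJ2025} characterization (fanout conditions, whether widgets can share inputs) has to be invoked carefully. The second is that normalization must not alter the widget shape, for example by merging double negations across the widget boundary. If the characterization turns out to be less rigid, the fallback is to try all $O(n^2)$ pairs $(\alpha,\beta)$ of gates reading $x_p$ before rejecting. This is still polynomial.
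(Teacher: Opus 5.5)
Your proposal is correct and follows the paper's proof. Both rest on the same four steps: decompose the circuit into $(\neg)\XOR_2$ widgets using the characterization of optimal circuits (Theorem~\ref{thm:XOR-structure}), check each three-gate block by a finite test, rule out false positives with the converse (Lemma~\ref{lem:XOR-id-correctness}), and evaluate at $\vec{0}$ to separate $\XOR_n$ from $\neg\XOR_n$. The only difference is that your peel-and-contract procedure, starting from the topologically minimal gate, makes explicit the partition step that the paper delegates to the proof in \cite{CarmosinoDJ2025}.
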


Computing the truth table in exponential in $n$. Improving over naive brute force is possible because the structure of optimal circuits computing $\XOR$ in the DeMorgan basis (when $\neg$ gates do not contribute to circuit size) has been characterized exactly \cite{CarmosinoDJ2025}.

\begin{theorem}[from \cite{CarmosinoDJ2025}]
    \label{thm:XOR-structure}
    When $\neg$ gates do not contribute to circuit size, optimal $(\neg)\XOR$ circuits in the DeMorgan basis are trees of $n-1$ $(\neg)\XOR_2$ widgets.  
\end{theorem}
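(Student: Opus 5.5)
The plan is to prove Theorem~\ref{thm:XOR-structure} by induction on $n$. The inductive step is a \emph{tight} version of Schnorr's argument from the proof of Theorem~\ref{thm:xor-refuter}: in an optimal circuit every substitution must eliminate \emph{exactly} three gates. From this we read off a local $(\neg)\XOR_2$ widget, contract it to a fresh variable, and recurse. The base cases are $n=1$ (no gates) and $n=2$, where one checks directly that the only $3$-gate DeMorgan circuits computing $(\neg)\XOR_2$ are the standard widgets $(\neg)((x\land \neg y)\lor(\neg x\land y))$ and their DeMorgan duals.

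For the inductive step, let $\cC$ be a normalized circuit of size $3(n-1)$ computing $(\neg)\XOR_n$. Take $\alpha$ topologically minimal, reading $(\neg)x_i,(\neg)x_j$.
\begin{itemize}
\item By Assertions 1--4 of Schnorr's proof, $x_i$ and $x_j$ both have fanout $\ge 2$, and neither $\alpha$ nor the second gate $\beta$ reading $x_i$ is the output.
\item \textbf{Both substitutions are tight.} For each $b\in\{0,1\}$, substituting $x_i\gets b$ yields a circuit for $(\neg)\XOR_{n-1}$, which needs at least $3(n-2)$ gates. Hence every substitution eliminates \emph{at most} three gates, while Schnorr's argument gives at least three when $b$ fixes $\beta$.
\item \textbf{Consequences of tightness.} We use this to rule out configurations:
  \begin{itemize}
  \item If $x_i$ had fanout $\ge 3$, the fixing substitution would kill the third reader too, eliminating $\ge 4$ gates.
  \item If the fixed gate had fanout $\ge 2$, or its successor $\gamma$ also became fixed, we would again eliminate $\ge 4$ gates.
  \end{itemize}
\item \textbf{The widget structure.} Running the same reasoning with the roles of $x_i,x_j$ and of $b$ swapped, I aim to show:
  \begin{itemize}
  \item $x_i$ and $x_j$ each have fanout exactly $2$, and both feed exactly $\alpha$ and $\beta$, so $\beta$ also reads $x_j$. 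If $\beta$ instead read some $x_k$ with $k\ne j$, fixing $x_k$ appropriately would cascade into a fourth elimination.
  \item $\alpha$ and $\beta$ have fanout $1$ into a common gate $\gamma$.
  \end{itemize}
\end{itemize}

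Next I identify the widget's function. The gate $\gamma$ computes a function $g(x_i,x_j)$ of the two variables. If $g$ were not $(\neg)\XOR_2$, it would be an $\land$-type or degenerate function. Then some single-variable constant fixes $\gamma$ as well, eliminating $\alpha,\beta,\gamma$ plus at least one successor of $\gamma$ (since $\gamma$ is not the output, as $n>2$). That makes $\ge 4$ gates, contradicting tightness. So $g=(\neg)\XOR_2$ and $\{\alpha,\beta,\gamma\}$ is a widget whose only external interface is the output wire of $\gamma$. Replace the widget by a new input $y$ to get a circuit of size $3(n-2)$ on $n-1$ inputs. This circuit computes $(\neg)\XOR_{n-1}$ in $y$ and the remaining variables, because $y$ ranges over both values as $(x_i,x_j)$ varies. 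So it is optimal, and by induction it is a tree of $n-2$ widgets. Substituting the widget back for $y$ gives a tree of $n-1$ widgets.

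The main obstacle I expect is the middle step: the exhaustive local case analysis showing that tightness forces exactly the fanout-$2$, shared-input, common-successor shape. In particular one must handle $\neg$ gates, which are free, sitting between $\alpha,\beta$ and $\gamma$, and the possibility that $\beta$ is fed by a non-input gate. I would first try to organize this by enumerating the possible readers of $x_i$ and $x_j$ and applying the ``$\ge 4$ eliminated'' contradiction to each. As a consequence, Lemma~\ref{lem:xor-detector} follows by greedily finding and contracting widgets in polynomial time: $\cC$ computes $(\neg)\XOR_n$ iff this process succeeds, and the output parity is read off from the accumulated negations.
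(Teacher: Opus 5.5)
\textbf{There is a genuine gap at the central step.} The paper only imports this theorem and gives no proof of it, so your proposal has to stand on its own.

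The outer frame is sound. Schnorr's bound makes every single-variable restriction of an optimal circuit remove at most three gates. Once you know that $x_i,x_j$ are read only by $\alpha,\beta$, and that $\alpha,\beta$ are read only by a common $\gamma$, your identification of $\gamma$ as $(\neg)\XOR_2$ and the contraction to a fresh input $y$ are correct.

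But that local shape is the whole content of the theorem, and you state it only as an aim. The tool you propose for it, exhibiting one substitution that removes $\geq 4$ gates, cannot establish it. Consider this configuration:
\begin{itemize}
\item $\alpha=x_i\land x_j$, whose only reader is $\gamma'=\alpha\lor\eta'$;
\item $\beta=x_i\lor x_j$, whose only reader is $\gamma=\beta\land\eta$, with $\gamma\neq\gamma'$;
\item $x_i,x_j$ are read only by $\alpha,\beta$.
\end{itemize}
Each of the four substitutions $x_i\gets 0,1$ and $x_j\gets 0,1$ removes exactly three gates, so no count forces a common successor. The same happens if $\beta=x_i\lor\zeta$ for a gate $\zeta\neq\alpha$, which is exactly the case you flag. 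Also, when $\beta$ reads some $x_k$ with $k\neq j$, fixing $x_k$ removes only $\beta$, its successor and the other reader of $x_k$. There is no fourth elimination.

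What excludes these configurations is a second-level argument that your plan lacks. Because such a substitution removes exactly three gates, the restricted circuit has size exactly $3(n-2)$ and is itself optimal. In it, some input has fanout one:
\begin{itemize}
\item in the first example, after $x_i\gets 0$ the gate $\beta$ passes to $x_j$, which is then read only by $\gamma$;
\item in the $\zeta$ case, fixing $\alpha$ through $x_j$ leaves $x_i$ read only by $\beta$.
\end{itemize}
You therefore need the fact that an optimal $(\neg)\XOR_m$ circuit with $m\geq 2$ has no input of fanout one. You can get it by applying the induction hypothesis to the restricted circuit, since every leaf of a widget tree has fanout two. Alternatively, argue directly: if $x$ feeds only $g(x,w)$, then $\XOR$ being sensitive to $x$ at every point forces $w$ never to take $g$'s fixing value. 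Hence $g\equiv(\neg)x$, and the circuit shrinks below Schnorr's bound.

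For $\beta$ reading $x_k$, fixing $x_k$ leaves $x_i$ read only by the topologically minimal gate $\alpha$. Schnorr's Case 2 applied to $\alpha$ inside the restricted circuit then already gives the contradiction, one step later than your ``cascade'' suggests. With this recursive use of optimality added to the local counting, your case analysis can be closed; without it, it cannot.
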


We can therefore partition $\cC$ into $(\neg)\XOR_2$ widgets. If this is impossible than $\cC$ does not compute $(\neg)\XOR_n$. To prevent false positives, we need the following claim which is the converse of Theorem \ref{thm:XOR-structure}.

\begin{lemma} \label{lem:XOR-id-correctness}
    Let $C$ be a circuit of size $3(n-1)$ and let $f: \{0, 1\}^n \rightarrow \{0, 1\}$ be the Boolean function it computes. If $C$ can be partitioned into $(\neg)\XOR_2$ widgets then $f \equiv (\lnot)\XOR_n$. 
\end{lemma}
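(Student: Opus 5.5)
We argue by induction on $n$, the number of $(\neg)\XOR_2$ widgets in the partition being $n-1$. First we fix what a $(\neg)\XOR_2$ widget is: a set of three binary DeMorgan gates (plus any $\neg$ gates, which cost nothing) with two designated inputs $u,v$ and one designated output $w$. The internal gates read only $u,v$, their negations, or each other, and $w$ computes $u\oplus v$ or $\neg(u\oplus v)$. For instance, $w=(u\lor v)\land\neg(u\land v)$ works, and any other internal wiring of three gates that computes $(\neg)\XOR_2$ also qualifies. We also take ``partitioned into widgets'' to mean the following. Every binary gate lies in exactly one widget. The only wires leaving a widget come from its output gate (possibly through negations). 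The widget inputs are either input variables or outputs of other widgets. So the \emph{widget graph}, with one node per widget and one edge per widget-to-widget wire, is a DAG whose single sink contains the circuit's output gate. With $3(n-1)$ gates there are exactly $n-1$ widgets.

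The key structural step is showing that the widget graph is a tree whose leaves are exactly the $n$ variables, each read by exactly one widget. Count edges: each widget has two input slots, giving $2(n-1)$ slots in total. Every non-output widget feeds at least one slot; otherwise its gates would be unreachable from the output, and a normalized circuit has no dead gates. That accounts for at least $n-2$ slots, so at most $n$ slots remain for variables. We also need every variable to be read; otherwise the claim fails, since $(\neg)\XOR_{n-1}$ padded with a dummy variable is a counterexample. I would either add this hypothesis (it holds in the checker's context, by Assertion~1 of the refuter) or derive it from the size count. Granting that all $n$ variables are read, the inequalities are tight: each variable occupies exactly one slot and each non-output widget feeds exactly one slot. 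A connected DAG in which every non-sink node has out-degree exactly one is an in-tree rooted at the output widget. Hence the circuit is a tree of $n-1$ widgets whose leaves are the $n$ distinct variables.

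Correctness then follows by structural induction on this tree. A widget whose inputs compute $(\neg)\XOR_{I_1}$ and $(\neg)\XOR_{I_2}$, with $I_1,I_2$ disjoint, outputs $\XOR_{I_1\cup I_2}\oplus c$ for some constant $c$, because the negations just shift the parity. Disjointness comes from the tree structure, since each variable sits at exactly one leaf. At the root, $I=[n]$, so $f\equiv(\neg)\XOR_n$. The point I expect to need the most care is pinning down the definition of ``partition into widgets'' precisely enough to rule out widget gates that send side wires outside the widget or read gates from other widgets. Without that exclusion, a ``widget'' could leak intermediate values such as $u\land v$ and the counting fails. I would build this exclusion into the definition, in the same way \cite{CarmosinoDJ2025} does, and justify it algorithmically: when the detector of Lemma~\ref{lem:xor-detector} partitions $\cC$, it verifies exactly these conditions, namely that each widget's truth table is $(\neg)\XOR_2$, that only the output gate has external fanout, and that every variable is read.
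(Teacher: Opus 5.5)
Your argument is correct under the hypotheses you make explicit, and its inductive core is the paper's. The paper's whole proof is ``strong induction, plus $(\neg)\XOR_2$ of parities on disjoint variable sets is a parity on their union.'' The difference is in how disjointness is obtained. The paper takes it for granted, implicitly reading ``partitioned into widgets'' as the tree-of-widgets-on-distinct-variables shape from Theorem~\ref{thm:XOR-structure}. You instead start from a weaker DAG-of-widgets notion and derive the tree shape by slot counting. Your route buys an explicit list of what the lemma actually needs:
\begin{itemize}
\item clean widget interfaces;
\item no unreachable gates (an output widget $\XOR_2(x_1,x_2)$ beside a dead widget $\XOR_2(x_1,x_3)$ reads every variable, has size $6$, and computes $\XOR_2$);
\item every variable being read.
\end{itemize}
The paper's route is shorter but leaves all of this hidden in the word ``partitioned.''

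One correction: your witness for needing the read-every-variable hypothesis does not work. No normalized circuit that ignores $x_n$ and computes $(\neg)\XOR_{n-1}$ splits into exactly $n-1$ clean widgets. To see this, weight each variable slot by the parity of the number of widget paths from its widget to the root; the root then computes the weighted parity plus a constant.
\begin{itemize}
\item A tree has $n$ variable slots, all of weight $1$, and $n$ slots cannot give each of $n-1$ variables an odd count.
\item With exactly one extra widget-to-widget edge, the targets of the unique out-degree-$2$ widget have weight $1$. So that widget and all its ancestors have weight $0$ and read no variables. But this ancestor set is finite and closed under taking widget inputs, so it contains a source, which must read variables. This is a contradiction.
\item With more extra edges, fewer than $n-1$ variable slots remain.
\end{itemize}
A valid witness is $\XOR_2(\XOR_2(x_1,x_2),x_1)$ with $n=3$: six gates, two clean widgets, $x_3$ unread, computing $x_2$.

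This also rules out your fallback of deriving the hypothesis from the size count, so it must be assumed or tested. The detector should test it directly. In the checker it is applied to $\cC'$ after substituting $x_p$, and Assertion~1 for $\cC$ does not carry over to $\cC'$.
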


\begin{proof}
    This follows via strong induction and verifying that taking $(\neg)\XOR_2$ of two circuits computing $(\neg)\XOR$ on distinct subsets of variables computes $(\neg)\XOR$ of their union.
\end{proof}

The characterization and the lack of false positives together yields the lemma.

\begin{proof}{Proof of Lemma \ref{lem:xor-detector}}
    To determine whether $\cC$ computes $\XOR_n$ (or $\neg \XOR_n)$, partition the circuit into $n-1$ blocks, each of size $3$, as described in the proof of Theorem \ref{thm:XOR-structure}. As there are a finite number of normalized optimal $(\neg)\XOR_2$ blocks, we simply hardcode a list of them in our algorithm to ensure that each block is an $(\neg)\XOR_2$ widget. If any block does not computes $(\neg)\XOR_2$ then we \textbf{reject}. Lastly, we determine whether $\cC$ computes $\XOR_n$ or $\neg\XOR_n$ we can evaluate $\cC$ on the all zero input: if it evaluates to $0$ then $C$ computes $\XOR_n$ rather than $\neg\XOR_n$. Since $\cC$ is normalized each of the above steps runs in polynomial time with respect to $n$.
\end{proof}

\subsection{An $\XOR$ Checker}
\label{sec:improved-refuter}
We now show Theorem \ref{thm:refined-xor-refuter} by giving an efficient procedure (Algorithm \ref{alg:xor-checker}) that leverages the refuter (Algorithm \ref{alg:xor-refuter}) and the detectability of $\XOR$. 

\begin{algorithm}[t]
  \caption{Checker for circuits purportedly computing $(\neg)\XOR_n$}
  \label{alg:xor-checker}
  \begin{algorithmic}[1]
    \Require $\cC$ is a normalized circuit, $I \subseteq [n]$, $c \in \{0,1\}$, $\vec{a} \in \F_2^n$, such that $\sigma(\cC) = 3(|I|-1)$ and $\cC$ only reads variables in $I$ and $\cC$ does not compute $\XOR_I \oplus c$
    \Ensure $\vec{w} \in \F_2^n$ such that $\cC(\vec{w}) \neq \XOR_{I}(\vec{w}) \oplus c$ and $\vec{w}_i = \vec{a}_i$ for all $i \not\in I$
    \Procedure{XOR-Checker}{$\cC, I, c, \vec{a}$}
    \State Run lines 2 - 23 of \textsc{XOR-Refuter}
    \setcounter{ALG@line}{23}  
    \LComment{We can eliminate at least three gates and \emph{try} to recurse}
     \State $\gamma \gets $ topologically minimal binary gate reading $(\neg)\beta$
     \State $\vec{a}_p \gets$ constant whose substitution for $x_p$ fixes $\beta$
     \State $\cC' \gets \cC$ with $\vec{a}_p$ substituted for $x_p$ and then simplified
     \If{$|\cC'| < |\cC| - 3$}
        \LComment{$\cC\hook_{x_p \gets \vec{a}_p}$ is too small to compute $(\neg) \XOR_{I \setminus \{p\}}$}
        \State \Return \Call{XOR-Refuter}{$\cC', I \setminus \{p\}, c \oplus \vec{a}_p, \vec{a}$}
     \ElsIf{$\cC'$ does not computes $\XOR_{I \setminus \{p\}} (\cdot )\oplus (c \oplus \vec{a}_p)$}
        \State \Return \Call{XOR-Checker}{$\cC', I \setminus \{p\} \oplus \vec{a}_p, \vec{a}$}
     \EndIf
     \LComment{$\cC\hook_{x_p \gets \vec{a}_p}$ \emph{does} compute $\XOR_{I \setminus \{p\}}(\cdot) \oplus (c \oplus \vec{a}_p)$; must flip $\vec{a}_p$ to find error}
     \State $\vec{a}_p \gets 1 + \vec{a}_p$
    \If{$\alpha$ only feeds $\beta$ as in Figure \ref{fig:checker-corner-case-1}}
        \State $\tilde{\cC} \gets$ $\cC$ with $\alpha$ removed and $\beta$ replaced by an optimal circuit computing the same function as $\beta$
        \LComment{$\tilde{\cC}$ computes the same function as $\cC$ but $\sigma(\tilde{\cC}) < 3(|I|-1)$}
        \State \Return \Call{XOR-Refuter}{$\tilde{\cC}, I, b, \vec{a}$}
    \EndIf
    \LComment{$x_p \gets 1 + \vec{a}_p$ also eliminates at least three gates}
    
    \State $\cC \gets \cC$ with $x_p$ substituted by $\vec{a}_p$ and simplified
    \State \Return \Call{XOR-Checker}{$\cC$, $I \setminus \{p\}, c \oplus \vec{a}_p$, $\vec{a}$}

    \EndProcedure
  \end{algorithmic}
\end{algorithm}

\begin{proof}[Proof of Theorem \ref{thm:refined-xor-refuter}]
    Let $\cC$ be a DeMorgan circuit on $n$ inputs of size $3(n-1)$ which does not compute $\XOR_n$. We will argue that \textsc{XOR-Checker}, Algorithm \ref{alg:xor-checker}, outputs an input on which $\cC$ fails to compute $\XOR_n$ in polynomial time when run with $I = [n], \vec{a} = \vec{0}$. 

    \textsc{XOR-Checker} proceeds as the \textsc{XOR-Refuter} up to line 23. In the base case, $n = 2$, the algorithm brute forces over all four possible inputs and $\cC$ must err on at least one of them else it computes $\XOR_2$. We proceed via induction; fix $k > 2$ to be the size of $I$ and presume that \textsc{XOR-Checker} is correct on any inputs where $|I| = k-1$.
    
    If the algorithm branches before line 24, then $\cC$ must err as in the proof of correctness for \textsc{XOR-Refuter}. Otherwise, Assertions 1 - 4 of Schnorr's proof hold for $\cC$ and setting $x_p$ to fix $\beta$ will eliminate at least three gates. However, unlike in \textsc{XOR-Refuter}, we cannot recurse immediately unless more than three gates have been removed (as in step 28). It is possible that the circuit after simplifying, $\cC'$, correctly computes $(\neg)\XOR_{n-1}$; $\cC$ might only err when $x_p$ is set the other way.

    At step 31 \textsc{XOR-Checker}, checks whether this is the case and if it is not then it can recurse. Since $\cC'$ does not compute the appropriate parity on $I \setminus \{p\}$, the returned value is also incorrect for $\cC$. If the algorithm does not return, then \textsc{XOR-Checker} must restrict $x_p$ in the opposite way from Schnorr to find an error. However, this eliminates $\beta$ via a \emph{passing} rule and does not guarantee $\gamma$ is removed. 
    
    If $\cC$ is structured as in Figure \ref{fig:checker-corner-case-1} (negations are omitted) it is possible for only two gates to be removed (e.g. if $\beta$ and $\alpha$ are both removed via a passing rule). We can however instead argue that $\cC$ is non-optimal and reduce its circuit size. Notice the subcircuit rooted at $\beta$ computes a binary Boolean function of $x_p$ and $x_q$ using two costly gates. This circuit is too small to compute $(\neg)\XOR_2$, and thus must compute a function in $\mathbf{U}_2$. However, all such functions can be computed using at most one binary gate in the DeMorgan basis. Therefore \textsc{XOR-Checker} can use brute force to determine which function $\beta$ computes, replace $\beta$ with an optimal circuit for that function, and remove $\alpha$. This yields a circuit of size strictly less than $3(k-1)$, and the algorithm can call \textsc{XOR-Refuter} to find an input it (and by extension $\cC$) err on.

    Otherwise restricting $x_p$ the opposite way is also guaranteed to remove at least three binary gates. Furthermore, the resulting circuit does not compute the appropriate parity of the inputs indexed by $I \setminus \{p\}$ and hence $\textsc{XOR-Refuter}$ returns an input, consistent with $\vec{a}$ on $p$ and non-$I$ indices, on which the simplified circuit, and by extension $\cC$ err.

    As detecting optimal $\XOR_n$ circuits can be done in polynomial time and \textsc{XOR-Refuter} is a polynomial time refuter, Algorithm \ref{alg:xor-checker} also runs in polynomial time. \qedhere

        \begin{figure}[t]
            \centering
            \includegraphics[scale=1]{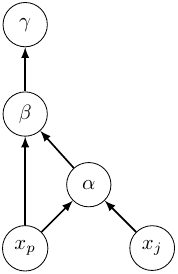}
            \caption{The local structure around $x_p$ where $x_p \gets 1 - \vec{a}_p$ might only eliminate two gates.}
            \label{fig:checker-corner-case-1}
        \end{figure} 
\end{proof}
\section{Subroutines for Affine Refuter}
\label{sec:affine-subroutines}

In this section we define Algorithms $\ref{alg:parity-support} - \ref{alg:find-substitution}$ used by Algorithm \ref{alg:affine-refuter}. For completeness, we show that these algorithms are correct and run in polynomial time.

\paragraph*{ParitySupport}

\begin{algorithm}[h]
    \caption{Identifies the affine function by a circuit of fanout $1$ $\oplus$-type gates}\label{alg:parity-support}
    \begin{algorithmic}[1]
    \Require $C$, a normalized circuit of fanout $1$ $\oplus$-type gates
    \Ensure $I, c$ such that $C$ computes $\bigoplus_{i \in I} x_i \oplus c$
    \Procedure{ParitySupport}{C}
    \State $I \gets \emptyset$
    \For{$i \in [n]$}
        \If{\Call{fanout}{$x_i$} is odd}
            \State add $i$ to $I$
        \EndIf
    \EndFor
    \State \Return $I,C(\vec{0})$
    \EndProcedure
    \end{algorithmic}
\end{algorithm}

\begin{proof}
    Let $\cC$ be a normalized circuit whose only gates (if any) are fanout $1$ $\oplus$-type gates. Then $C$ computes the formula where $\oplus$ is the operator and the inputs are variables and constants. As $\oplus$ commutes, it is easy to see that every variable which appears in the formula an even number of times is canceled out. Thus $\cC$ computes $\bigoplus_{i \in I} x_i \oplus c$ for some $c \in \{0,1\}$. Furthermore, $\cC$, $\cC(\vec{0}) = \bigoplus_{i \in I} 0 \oplus c = c$ as desired. Checking the fanout of each $x_i$ and evaluating $\cC$ on one input can be done in quadratic time with respect to $|\cC|$.
\end{proof}

\paragraph*{AffineIntersect}

\begin{algorithm}[h]
    \caption{Computes the description for an affine space and linear restriction}\label{alg:affine-intersection}
    \begin{algorithmic}[1]
    \Require $U \in \F_2^{n \times d'}, \vec{u} \in {\F_2^n}, V \in \F_2^{n \times (n-1)},$ and $\vec{v} \in  \F_2^n$ where $U$ and $V$ are full column rank and $\{U\vec{x} + \vec{u}\} \cap \{V\vec{y} + \vec{v}\} \neq \emptyset$
    \Ensure $W, \vec{w}$ such that $W \in \F_2^{n \times \delta}$ for $\delta \in \{d'-1, d\}$ is full column rank and $\{W\vec{z} + \vec{w}\} = \{U\vec{x} + \vec{u}\} \cap \{V\vec{y} + \vec{v}\}$
    \Procedure{AffineIntersect}{$U,\vec{u},V,\vec{v}$}
    \State $\mat{\vec{x}_0 \\ \vec{y}_0} \gets$ a solution to $[U \mid V] \mat{\vec{x} \\ \vec{y}} = \vec{u} + \vec{v}$
    \State $\mat{\vec{x}_1 \\ \vec{y}_1}, \ldots, \mat{\vec{x}_\delta \\ \vec{y}_\delta} \leftarrow$ a basis for $\ker\left(\left[U \mid V\right]\right)$
    \State $X \leftarrow $ a $D \times \delta$ matrix whose $\delta$ columns are $x_1, x_2, \ldots, x_\delta$
    \State \Return $UX, U\vec{x}_0 + \vec{u}$
    \EndProcedure
    \end{algorithmic}
\end{algorithm}

\begin{proof}
    Let $U \in \F_2^{n \times d'}, \vec{u} \in \F_2^n, V \in \F_2^{n \times (n-1)}$, and $\vec{v} \in \F_2^n$ where $U$ and $V$ are full column rank and $\{Ux + \vec{u} \cap \{Vy + \vec{y}\} \neq \emptyset$. We first argue that steps 2 and 3 can be computed. As the intersection is not empty, there exists some $\vec{x}', \vec{y}'$ such that $U\vec{x}' + \vec{u} = V\vec{y}' + \vec{v}$. Rearranging yields that $U\vec{x}' + V\vec{y}' = \vec{u} + \vec{v}$. We rewrite this using block matrices to get $[U \mid V] \mat{\vec{x}' \\ \vec{y}'} = \vec{u} + \vec{v}$. Since there is a solution to this system, we can perform Gaussian elimination to find one. Furthermore, Gaussian elimination can also be used in step $3$ to find a basis for $\ker([U\mid V])$. 
    
    We show that $\delta$ is either $d'$ or $d'-1$. Let $\delta$ be $\dim(\ker([U \mid V]))$ and observe that $\dim(\ker([U | V]) = \dim(\im(U) \cap \im(V)) =  \dim(U) + \dim(V) - \dim(\im(U) \cup \im(V))$. As $\dim(U) = d'$, $\dim(V) = n-1$ and $\dim(\im(V)) \leq \dim(\im(U) \cup \im(V)) \leq \dim(F_2^n) = n$, we conclude that $\delta$ is either $d$ or $d-1$.

    We verify that $W = UX$ and $\vec{w} = Ux_0 + \vec{u}$ define the intersection of the two affine spaces. Fix $\hat{z} \in \F_2^\delta$ and consider $W\vec{z} + \vec{w}$. Unwrapping definitions, we have $W\vec{z} + \vec{w} = UX\vec{z} + U\vec{x}_0 + \vec{u} = U\left(X\vec{z} + \vec{x}_0\right) + \vec{u}$ and thus by definition it is in $\{U\vec{x} + \vec{u}\}$. To show $W\vec{z} + \vec{w} \in \{V\vec{y} + \vec{v}\}$, we must first rewrite $X$. 
    
    Let $K$ be the matrix whose columns are the computed basis for $\ker([A | B])$. Then $X = [\mathbf{I}_\delta | \mathbf{0}_{n -1}]K$ where $\mathbf{I}_\delta$ is the $\delta \times \delta$ identity matrix and $\mathbf{0}_{n-1}$ is the $(n-1) \times (n-1)$ zero matrix. We also rewrite $[U \mid V] = U[\mathbf{I}_\delta \mid \mathbf{0}_{n -1}] + V[\mathbf{0}_\delta \mid \mathbf{I}_{n-1}]$. As $[U \mid V]Kv = \vec{0}$ for any $v$, we have $U[\mathbf{I}_\delta \mid \mathbf{0}_{n -1}]K\vec{z} = V[\mathbf{0}_\delta \mid \mathbf{I}_{n-1}]K\vec{z}$. Furthermore, $U\vec{x}_0 + \vec{u} = V\vec{y_0} + \vec{v}$ since $\mat{\vec{x}_0 \\ \vec{y}_0}$ as a solution. Hence,
        \begin{align*}
        W\vec{z} + \vec{w} &= UX\vec{z} + U\vec{x}_0 + \vec{u} \\ 
        &= U[\mathbf{I}_\delta \mid \mathbf{0}_{n-1}]K\vec{z} + U\vec{x}_0 + \vec{u} \\
        &=V[\mathbf{0}_\delta \mid \mathbf{I}_{n-1}]K\vec{z} +  V\vec{y}_0 + \vec{v} \\
        &=V\left([\mathbf{0}_\delta \mid \mathbf{I}_{n-1}]K\vec{z} +  \vec{y}_0\right) + \vec{v} \\
    \end{align*}

    We now show that every element in the intersection is also in $\{W\vec{z} + \vec{w}\}$. Let $\vec{u}$ be an element of the intersection: $u = U\vec{x} + \vec{u} = V\vec{y} + \vec{v}$ for some $\vec{x}, \vec{y}$. Rearranging yields $U\vec{x} + V\vec{y} = \vec{u} + \vec{y}$ and therefore $[U | V] \mat{\vec{x} \\ \vec{y}} = \vec{u} + \vec{v}$. As $\mat{\vec{x} \\ \vec{y}}$ is a solution to the system, we can write that $\mat{\vec{x} \\ \vec{y}} = \mat{\vec{x}' \\ \vec{y}'} + \mat{\vec{x}_0 \\ \vec{y}_0}$ for some $\mat{\vec{x}' \\ \vec{y}'} \in \ker\left([U | V]\right).$ Therefore $\mat{\vec{x}' \\ \vec{y}'} = K\vec{z}'$ for some $\vec{z}' \in \F_2^\delta$. Therefore:
    $$ u = U\vec{x} + \vec{u} = U[\mathbf{I}_\delta \mid \mathbf{0}_{n-1}]\left(K\vec{z}' + \mat{\vec{w}_0 \\ \vec{z}_0}\right) + \vec{u} = UX\vec{z'} + U\vec{w}_0 + \vec{u} = W\vec{z}' + \vec{w}.$$

    Therefore the algorithm correctly outputs the direction matrix and offset for the intersection of the given affine subspaces. As Gaussian elimination is efficient, the overall algorithm runs in polynomial time with respect to $n$ as desired.
\end{proof}

\paragraph*{ConstraintToAffine}

\begin{algorithm}[h]
    \caption{Computes the affine space corresponding to a parity constraint}\label{alg:constraint-to-affine}
    \begin{algorithmic}[1]
    \Require $I \subseteq [n], c \in \{0,1\}$ such that $I \neq \emptyset$
    \Ensure $B, \vec{v}$ such $B \in \F_2^{n \times (n-1)}, b \in \F_2^n$ and $\{B\vec{z} + b\} = \{\vec{x} \in F_2^n \mid \bigoplus_{i \in I} \vec{x}_i = c\}$
    \Procedure{ConstraintToAffine}{$I,c$}
    \State $\vec{z}_0 \leftarrow$ a solution to $\ind{I} \vec{z} = c$
    \State $\vec{z}^1, \vec{z}^2, \ldots z^{n-1} \gets$ a basis for $\ker(\ind{I})$
    \State $Z \gets$ the $n \times (n-1)$ matrix whose columns are $\vec{z}^1, \vec{z}^2, \ldots z^{n-1}$
    \State \Return $Z, \vec{z}_0$
    \EndProcedure
    \end{algorithmic}
\end{algorithm}

\begin{proof}
    The algorithm is well-defined. As $I \neq 0$, there is a solution to $\ind{I}(\vec{z}) = c$. The algorithm can solve the system and find a solution and find a basis for the kernel. As the rank of $\ind{I}$ is $1$, the basis has dimension $n-1$ by the rank-nullity theorem. Correctness follows immediately: $\bigoplus_{i \in I} x_i = c$ is a linear equation over $\F_2$ and its solution set is exactly the affine space $\vec{z}_0 + \ker(\ind{I})$.

    As Gaussian elimination is efficient and can be used to implement sets $2$ and $3$, the algorithm runs in polynomial time with respect to $n$.
\end{proof}

\paragraph*{FindSubstitution}

\begin{algorithm}[h]
    \caption{Computes a substitution which makes $\cC$ output constant $b$}\label{alg:find-substitution}
    \begin{algorithmic}[1]
    \Require $\cC$ a normalized non-constant circuit of fan-out $1$ $\oplus$ gates, $b \in \{0,1\}$
    \Ensure $\cS, j$ such that $\cS \equiv \bigoplus_{i \in I} x_i \oplus c_0$ for some $I$ where substituting $x_j$ in $C$ for $\cS$ makes $C$ output constant $b$
    \Procedure{FindSubstitution}{$\cC,b$}
    \State $(I, c) \leftarrow$ \Call{ParitySupport}{$\cC$}
    \State $j \leftarrow$ any element of $I$
    \State $\cS \leftarrow \bigoplus_{i \in I \setminus \{j\}} x_i \oplus (c \oplus b)$ as a circuit
    \State \Return $\cS, j$
    \EndProcedure
    \end{algorithmic}
\end{algorithm}

\begin{proof}
    The algorithm is correct. As $\cC$ is non-constant and $\textsc{paritysupport}$ is correct, $\cC$ computes $\bigoplus_{i \in I} x_i \oplus c$ and $I \neq \emptyset$. Substituting $x_j \gets S$ in $\cC$ means that $\cC$ computes $\bigoplus_{i \in I \setminus \{I\}} x_i \oplus \left(\bigoplus_{i \in I \setminus \{I\}} x_i \oplus (c \oplus b)\right)  \oplus  c = b$ as desired. As \textsc{paritysupport} runs in polynomial time, and $|S| = O(|\cC|)$, the algorithm runs in polynomial time.
\end{proof}

\end{document}